\documentclass[aps, prd, preprint, groupedaddress,showpacs,18pt]{revtex4}
\usepackage{hyperref}
\usepackage{mathrsfs}
\usepackage{bbm}
\usepackage{amsfonts}
\usepackage{amsmath}
\usepackage{graphicx}
\usepackage{subfig}
\usepackage{feynmp}

\textwidth = 6.25 in

\parskip 6pt

\newenvironment{proof}{Proof:}

\def\endofproof {\hfill{$\Box$}\\}

\newcommand{\void}[1]{}


\def\be{\begin{equation}}
\def\ee{\end{equation}}
\def\bea{\begin{eqnarray}}
\def\eea{\end{eqnarray}}
\def\lmd{\lambda}
\def\la{\langle}
\def\ra{\rangle}
\def\bt{\beta}

\newcommand{\td}[1]{\tilde{#1}}

\newcommand{\bary}{\begin{array}}
\newcommand{\eary}{\end{array}}

\def\nb{\nonumber}

\def\bea{\begin{eqnarray}}
\def\eea{\end{eqnarray}}
\def\beq{\begin{equation}}
\def\eeq{\end{equation}}
\def\to{\rightarrow}
\newcommand{\beas}{\begin{eqnarray*}}
\newcommand{\eeas}{\end{eqnarray*}}
\newcommand{\ba}{\begin{array}}
\newcommand{\ea}{\end{array}}

\newcommand{\eref}[1]{{(\ref{#1})}}
\newtheorem{lemma}{Lemma}
\newtheorem{theorem}{Theorem}
\newtheorem{corollary}{Corollary}
\newcommand{\cref}[1]{{{\bf{Corollary~}\ref{#1}}}}
\newcommand{\tref}[1]{{{\bf{Theorem~}\ref{#1}}}}
\newcommand{\sref}[1]{{{\bf{Section~}\ref{#1}}}}
\newcommand{\lref}[1]{{{\bf{Lemma~}\ref{#1}}}}

\begin{document}

   \title{Boundary Behaviors for General Off-shell Amplitudes in Yang-Mills Theory}

\author{Yun Zhang\footnote{zyzhangyun2003@gmail.com} and Gang Chen\footnote{corresponding author: gang.chern@gmail.com}}
{\affiliation{Department of Physics, Nanjing University\\
22 Hankou Road, Nanjing 210093, China
}

\hspace{1cm}
\begin{abstract}
In this article, we analyze the boundary behaviors of pure Yang-Mills amplitudes under adjacent and non adjacent BCFW shifts in Feynman gauge. We introduce reduced vertexes for Yang-Mills fields, prove that these reduced vertexes are equivalent to the original vertexes as for the study of boundary behaviors, which greatly simplifies our analysis of boundary behaviors. Boundary behaviors for adjacent shifts are readily obtained using reduced vertexes. Then we prove a theorem on permutation sum and use it to prove the improved boundary behaviors for non-adjacent shifts. Based on the boundary behaviors, we find that it is possible to generalize BCFW recursion relation to calculate general tree level off shell amplitudes.\end{abstract}

\pacs{11.15.Bt, 12.38.Bx, 11.25.Tq}

\date{\today}
\maketitle

\section{Introduction}


Recent years, BCFW recursion relation \cite{Britto:2004nj,Britto:2004nc,Britto:2004ap} has been widely used in various quantum field theories. At tree level, the amplitudes in pure Yang-Mills theory are rational functions of external momenta and external polarization vectors in spinor form 
\cite{Parke:1986gb,Xu:1986xb,Berends:1987me,Kosower,Dixon1,Witten1}. According to this, BCFW recursion relation was proposed and developed in \cite{Britto:2004nj,Britto:2004nc,Britto:2004ap}, and then proved in \cite{Britto:2005fq} using the pole structures of the tree level on shell amplitudes.  Besides the progresses on on-shell amplitudes,  off-shell amplitudes are also studied using BCFW or other methods \cite{Feng, Chen1, Chen3, Britto, Chen2, Berends:1987me}.  Although off-shell amplitudes are gauge dependent and usually complicated, they are of great importance in the phenomenological calculations. Moreover, off-shell amplitudes emerge in the construction of on-shell loop level amplitudes.   Hence it is also valuable to get recursion relations for general off-shell amplitudes.

BCFW recursion relation works very well when the amplitudes vanish at large BCFW shift limit. Hence the boundary behaviors of the amplitudes are very important for building up BCFW recursion relation.  Furthermore, improved boundary behaviors also imply new amplitude relations like BCJ relations \cite{Bern:2008qj,Boels,FengJia}.  At tree and loop level Yang-Mills amplitudes, the boundary behaviors were analyzed in \cite{Nima1, Boels} in AHK gauge for both adjacent and non-adjacent BCFW shifts. Hence a natural question is whether it is possible to analyze the boundary behaviors in usual Feynman gauge, and why essentially non-adjacent BCFW shifts have improved boundary behaviors in Feynman gauge comparing with adjacent BCFW shifts. Furthermore, according to the boundary behaviors, can we build up the recursion relation correspondingly for general off-shell amplitudes?

In this article, we first describe the procedure to obtain general off-shell amplitudes recursively in \sref{Sec:Off-shell} using BCFW technique and the technique in \cite{Chen3}. The procedure bases on the boundary behaviors of amplitudes in Feynman gauge, which are proved in the following sections. In \sref{Sec:Reduce} we prove that the boundary behaviors of amplitudes can be analyzed using reduced vertexes, which are defined in the section. Using the conclusion of this section, we directly obtain the boundary behaviors for adjacent shifts. In \sref{Sec:Non-Adj} we analyze the behaviors of the amplitudes for non-adjacent shifts. We find that permutation sum greatly improves the boundary behaviors for non-adjacent shifts compared to adjacent shifts. 


 
\section{Recursion Relation for General Off-shell Amplitudes}\label{Sec:Off-shell}
Throughout this paper, we will use $k_l$ and $k_r$ for the pair of momenta to be shifted, with indices $\mu$ and $\nu$. The momenta shift is
\be
\hat k_l=k_l+z \eta~~~~~~\hat k_r=k_r-z\eta,
\label{momshift}
\ee
with 
\be\label{conditionEta}
\eta^2=k_l\cdot \eta=k_r\cdot \eta=0. 
\ee 
Since we need to shift two off-shell lines for general off-shell amplitudes in Yang-Mills theory, we do not require the momenta of the two shifted lines, ie. $k_l$ and $k_r$, to be on-shell. Other un-shifted lines are also in general off shell. Let two arbitrary vectors $\epsilon_{l\ \mu}$ and $\epsilon_{r\ \nu}$ couple to the two shifted lines, the amplitude is $\mathcal{M}^{\mu\nu} \epsilon_{l\ \mu} \epsilon_{r\ \nu}$. The indices of other external lines are suppressed.

To get all the components of $\mathcal{M}^{\mu\nu}$, we need to know the amplitudes $\mathcal{M}^{\mu\nu} \epsilon_{l\ \mu} \epsilon_{r\ \nu}$ for $4\times 4$ independent pairs of $\epsilon_{l\ \mu}$ and $\epsilon_{r\ \nu}$ in four dimensional field theory.  According to \cite{Chen1,Chen2}, when one of the shifted lines contracts with its momentum, there is a natural recursion relation according to the cancellation details of Ward identity in Feynman gauge. For example with color ordered amplitude $A(k_1,k_2,\cdots,k_{N+1})^{\mu_1\mu_2\cdots\mu_{N+1}}$, we derive:
\begin{small}
\bea
&&k_{\mu_{N+1}} \mathcal{M}(k_1,k_2,\cdots,k_{N+1})^{\mu_1\mu_2\cdots\mu_{N+1}}\label{longitudinalcomp}\\
=&&-\frac{1}{\sqrt{2}}\frac{k_1^2}{(k_1+k_{N+1})^2}\delta^{\mu_1}_\rho \mathcal{M}(k_2,k_3,\cdots,k_{N},k_1+k_{N+1})^{\mu_2\mu_3\cdots\mu_N \rho}\nonumber\\
&&+\frac{1}{\sqrt{2}}\frac{k_N^2}{(k_N+k_{N+1})^2}\delta^{\mu_N}_\rho \mathcal{M}(k_1,k_2,\cdots,k_{N-1},k_1+k_{N})^{\mu_1\mu_2\cdots\mu_{N-1} \rho}\nonumber\\
&&+\sum_{j=1}^{N-1}\frac{i K_{1,j\ \rho}\mathcal{M}(k_1,k_2,\cdots,k_j,K_{1,j})^{\mu_1\mu_2\cdots\mu_j \rho} k_{N+1\ \sigma} \mathcal{M}(k_{j+1},k_{j+2},\cdots,k_N,K_{j+1,N})^{\mu_{j+1}\mu_{j+2}\cdots\mu_N \sigma}}{\sqrt{2}K_{1,j}^2 K_{j+1,N}^2}\nonumber\\
&&-\sum_{j=1}^{N-1}\frac{i k_{N+1\ \rho}\mathcal{M}(k_1,k_2,\cdots,k_j,K_{1,j})^{\mu_1\mu_2\cdots\mu_j \rho} K_{j+1,N\ \sigma} \mathcal{M}(k_{j+1},k_{j+2},\cdots,k_N,K_{j+1,N})^{\mu_{j+1}\mu_{j+2}\cdots\mu_N \sigma}}{\sqrt{2}K_{1,j}^2 K_{j+1,N}^2}.\nonumber
\eea
\end{small}
In the above we have reduced $k_{\mu_{N+1}} \mathcal{M}(k_1,k_2,\cdots,k_{N+1})^{\mu_1\mu_2\cdots\mu_{N+1}}$ to less point amplitudes. $K_{1,j}=k_1+k_2+\cdots+k_j$ and $K_{j+1,N}=k_{j+1}+\cdots+k_N$. The indices for the amplitudes are in the same order as the momenta in the brackets of the amplitudes. In the first two lines, $\delta$ is Kronecker delta. In the last two lines, when $j=1$ or $j=N-1$, we define $\mathcal{M}(k_1,k_1)^{\mu_1 \rho}=i k_1^2 g^{\mu_1\rho}$ and $\mathcal{M}(k_N,k_N)^{\mu_N \sigma}=i k_N^2 g^{\mu_N \sigma}$.

Hence to build up BCFW recursion relation for general off-shell amplitudes, we only need to consider other three components of the external vectors coupling to the shifted lines.

For convenience, the momenta can be written in spinor form:
\be
k=\left\{ \begin{array}{cl}
\lmd\td\lmd+\bt\td\bt & ~~~~~\textrm{if $k$ is time-like}\\
\lmd\td\lmd-\bt\td\bt & ~~~~~\textrm{if $k$ is space-like,}\\
\lmd\td\lmd & ~~~~~\textrm{if $k$ is light-like}
\end{array}\right.
\ee
where the spinors with tilde are the complex conjugates of those without tilde for real momenta. Here we exemplify the cases with time-like or light-like $k_l$ and $k_r$, and the case with either space-like $k_l$ or $k_r$ is similar.

We first consider the case with both $k_l$ and $k_r$ off shell. We write $k_l$ as $k_l= \lmd_l\td\lmd_l+\bt_l\td\bt_l$ \cite{Chalmers}.  As analyzed in \cite{Chen3}, since there is $U(2)$ freedom for choosing the  spinors of $k_l$, we can choose them such that $(\lmd_l\td\bt_l)\cdot k_r=(\bt_l\td\lmd_l)\cdot k_r=0$. At the same time we can set the spinors for $k_r$ to be either $k_r= \lmd_l\td\lmd_r+\bt_r\td\bt_l$ or $k_r= \lmd'_r\td\lmd_l+\bt_l\td\bt'_r$. Hence we have two choices for the shifting momentum $\eta$ as $\eta=\lmd_l\td\bt_l$ or $\eta'=\bt_l\td\lmd_l$, which satisfy the condition \eref{conditionEta}.

First for $\eta=\lmd_l\td\bt_l$, the external vectors are written as 
 \be
\begin{array}{cc}
\epsilon_l\in\left( \begin{array}{cl}
 \epsilon_l^-&= \lmd_l\td\bt_l\\
 \epsilon_l^+&= \bt_l\td\lmd_l\\
  \epsilon_l^{\perp}&= \lmd_l\td\lmd_l-\bt_l\td\bt_l,
\end{array}\right) &~~~~~\epsilon_r\in\left( \begin{array}{cl}
 \epsilon_r^-&= \lmd_l\td\bt_l\\
 \epsilon_r^+&= \bt_r\td\lmd_r\\
  \epsilon_r^{\perp}&= \lmd_l\td\lmd_r-\bt_r\td\bt_l-z\lmd_l\td\bt_l
\end{array}\right).
\ea
\label{epsilonset1}
\ee
Under the momenta shift \eref{momshift}, we have
\bea\label{spinorShift}
\td\lmd_l&\rightarrow& \hat{\td\lmd}_l=\td\lmd_l+z\td\bt_l \nb\\
\bt_r&\rightarrow& \hat\bt_r=\bt_r-z\lmd_l.
\eea
In $\epsilon_r^\perp$, we add the term $-z\lmd_l\td\bt_l$, such that after the momenta shift \eref{spinorShift}, $\hat\epsilon_r^\perp$ is independent of z and still $\hat k_r\cdot\hat\epsilon_r^\perp=0$.



Then for $\eta'=\bt_l\td\lmd_l$, we just replace $\epsilon_r$ with $\epsilon'_r$ which is defined as following:
\be
\epsilon'_r\in\left( \begin{array}{cl}
 \epsilon_r^{'-}&= \lmd'_r\td\bt'_r\\
 \epsilon_r^{'+}&= \bt_l\td\lmd_l\\
  \epsilon_r^{'\perp}&= \lmd'_r\td\lmd_l-\bt_l\td\bt'_r-z\bt_l\td\lmd_l
\end{array}\right).
\label{epsilonset2}
\ee
Under the momenta shift, we have 
\bea\label{spinorShift2}
\lmd_l&\rightarrow& \hat{\lmd}_l=\lmd_l+z\bt_l \nb\\
\td\bt'_r&\rightarrow& \hat{\td\bt}'_r=\td\bt'_r-z\td\lmd_l.
\eea
If one of the lines is on shell and another is off-shell, without loss of generality, we set $l$-line to be on-shell and $r$-line to be off-shell. Writing $k_l$ as $\lmd_l\td\lmd_l$ and using the little group transformation of $k_r$, the momentum of $r$-line can be written as $k_r=\lmd_l\td\lmd_r+\bt_r\td\bt'_r=\lmd'_r\td\lmd_l+\bt{''}_r\td\bt{'''}_r$. Correspondingly, one of the shifting momentum is $\eta=\lmd_l \td\bt'_r$ and the other is $\eta'=\bt^{''}_r\td\lmd_l$. When the shifting momentum is $\eta$, the external vectors  are written as 
 \be
\begin{array}{cc}
\epsilon_l\in\left( \begin{array}{cl}
 \epsilon_l^-&= {\lmd_l\td\bt'_r\over [\td\lmd_l,\td\bt'_r]}\\
 \epsilon_l^+&={\bt_l\td\lmd_l\over \la\bt_l,\lmd_l\ra}\\
\end{array}\right) &~~~~~\epsilon_r\in\left( \begin{array}{cl}
 \epsilon_r^-&= \lmd_l\td\bt'_r\\
 \epsilon_r^+&= \bt_r\td\lmd_r\\
  \epsilon_r^{\perp}&=\lmd_l\td\lmd_r-\bt_r\td\bt'_r-z \lmd_l\td\bt'_r
\end{array}\right) .
\ea
\ee
Under the momenta shift, the spinors transform as 
\bea\label{spinorShift3}
\td\lmd_l&\rightarrow& \hat{\td\lmd}_l=\td\lmd_l+z\td\bt'_r \nb\\
\bt_r&\rightarrow& \hat\bt_r=\bt_r-z\lmd_l.
\eea
When the shifting momentum is $\eta'$, then the external vectors can be written as 
\be
\begin{array}{cc}
\epsilon_l\in\left( \begin{array}{cl}
 \epsilon_l^-&= {\lmd_l\td\bt'_r\over [\td\lmd_l,\td\bt'_r]}\\
 \epsilon_l^+&={\bt''_r\td\lmd_l\over \la\bt''_r,\lmd_l\ra}\\
\end{array}\right) &~~~~~\epsilon'_r\in\left( \begin{array}{cl}
 \epsilon_r^{'-}&= \lmd'_r\td\bt{'''}_r\\
 \epsilon_r^{'+}&= \bt{''}_r\td\lmd_l\\
  \epsilon_r^{'\perp}&=\lmd'_r\td\lmd_l-\bt{''}_r\td\bt{'''}_r-z\bt{''}_r\td\lmd_l
\end{array}\right).
\ea
\ee
Correspondingly, the spinors transform as 
\bea\label{spinorShift4}
\lmd_l&\rightarrow& \hat{\lmd}_l=\lmd_l+z\bt^{''}_r \nb\\
\td\bt^{'''}_r&\rightarrow& \hat{\td\bt}^{'''}_r=\td\bt^{'''}_r-z\td\lmd_l.
\eea

The case with both shifted lines on-shell is discussed in \cite{Boels}.

To use BCFW recursion relation for the full amplitudes, we need to analyze the boundary behaviors for the amplitudes with shifted momenta. We can find for all the cases discussed above, the following conditions hold
\be\label{conditionVector}
\hat k_l\cdot \hat\epsilon_l=\hat k_r \cdot \hat\epsilon_r=0.
\ee

As will be proved in the following sections, under the conditions \eref{conditionEta} and \eref{conditionVector}, we have  
\be\label{behaviorOff}
\mathcal{\hat M}^{\mu\nu}=\left\{ \begin{array}{cl}
z  A_1 g^{\mu\nu}+ A_0 g^{\mu\nu}+ B^{\mu\nu}+\mathcal{O}({1\over z}) &~~~\text{for adjacent shift}\\
 A'_0 g^{\mu\nu}+\mathcal{O}({1\over z}) &~~~\text{for non-adjacent shift}
\end{array}\right. .
\ee
In \eref{behaviorOff}, all the un-shifted and shifted external lines can be off-shell. 


According to \eref{behaviorOff}, we can get the large $z$ scaling behaviors for general off-shell amplitudes $\mathcal{M}^{\mu\nu} \epsilon_{l\ \mu} \epsilon_{r\ \nu}$ for all the BCFW shifts above:
\begin{itemize}
\item Both $k_l$ and $k_r$ off-shell with shifting momentum: $\eta=\lmd_l\td\bt_l$
\bea\label{behaviorTable1}
\left.\begin{array}{c|c|c|c}
 &\epsilon^-_r &\epsilon^+_r&\epsilon_r^{\perp}\\ \hline 
\epsilon^-_l & z^{-1}&z^2&z^0 \\ \hline 
 \epsilon^+_l&z^2 &z^3&z^2 \\ \hline 
\epsilon_l^{\perp}  &z &z^3&z^2 \\ 
\end{array}\right. &~~~~~~\left.\begin{array}{c|c|c|c}
 &\epsilon^-_r &\epsilon^+_r&\epsilon_r^{\perp}\\ \hline 
\epsilon^-_l & z^{-1}&z&z^{-1} \\ \hline 
 \epsilon^+_l&z &z^2&z\\ \hline 
\epsilon_l^{\perp}  &z^{0} &z^2&z \\ 
\end{array}\right. \nb \\
\text{Adjacent}&~~\text{Non-adjacent }
\eea
\item Both $k_l$ and $k_r$ off-shell with shifting momentum $\eta'=\bt_l \td\lmd_l$
\bea\label{behaviorTable2}
\left.\begin{array}{c|c|c|c}
 &\epsilon^-_r &\epsilon^+_r&\epsilon_r^{\perp}\\ \hline 
\epsilon^-_l & z^{3}&z^2&z^2 \\ \hline 
 \epsilon^+_l&z^2 &z^{-1}&z^0 \\ \hline 
\epsilon_l^{\perp}  &z^3 &z&z^2 \\ 
\end{array}\right. &~~~~~~\left.\begin{array}{c|c|c|c}
 &\epsilon^-_r &\epsilon^+_r&\epsilon_r^{\perp}\\ \hline 
\epsilon^-_l & z^{2}&z&z \\ \hline 
 \epsilon^+_l&z &z^{-1}&z^{-1} \\ \hline 
\epsilon_l^{\perp}  &z^2 &z^{0}&z \\ 
\end{array}\right. \nb \\
\text{Adjacent}&~~\text{Non-adjacent }
\eea
\item $k_l$ on-shell and $k_r$ off-shell with shifting momentum $\eta=\lmd_l \td\bt'_r$
\bea\label{behaviorTable3}
\left.\begin{array}{c|c|c|c}
 &\epsilon^-_r &\epsilon^+_r&\epsilon_r^{\perp}\\ \hline 
\epsilon^-_l & z^{-1}&z^2&z^0 \\ \hline 
 \epsilon^+_l&z^2 &z^3&z^2 \\ 
\end{array}\right. &~~~~~~\left.\begin{array}{c|c|c|c}
 &\epsilon^-_r &\epsilon^+_r&\epsilon_r^{\perp}\\ \hline 
\epsilon^-_l & z^{-1}&z&z^{-1} \\ \hline 
 \epsilon^+_l&z &z^2&z \\ 
\end{array}\right. \nb \\
\text{Adjacent}&~~\text{Non-adjacent }
\eea
\item $k_l$ on-shell and $k_r$ off-shell with shifting momentum $\eta'=\bt^{''}_r\td\lmd_l$
\bea\label{behaviorTable4}
\left.\begin{array}{c|c|c|c}
 &\epsilon^-_r &\epsilon^+_r&\epsilon_r^{\perp}\\ \hline 
\epsilon^-_l & z^{3}&z^2&z^2 \\ \hline 
 \epsilon^+_l&z^2 &z^{-1}&z^0 \\ 
\end{array}\right. &~~~~~~\left.\begin{array}{c|c|c|c}
 &\epsilon^-_r &\epsilon^+_r&\epsilon_r^{\perp}\\ \hline 
\epsilon^-_l & z^{2}&z&z \\ \hline 
 \epsilon^+_l&z &z^{-1}&z^{-1} \\ 
\end{array}\right. \nb \\
\text{Adjacent}&~~\text{Non-adjacent }
\eea
\end{itemize}

According to the little group property and the analysis in \cite{Chen3}, and using essentially the same procedures therein, we can construct the BCFW recursion relation for off shell amplitudes. We exemplify the procedure in the case that all external legs are off shell and show how it is reduced to less point amplitudes.

We choose a specific $r$-line, and two non adjacent $l$-lines, ie. $l_1$ and $l_2$. Then we can do two shifts: $l_1$ and $r$ lines, or $l_2$ and $r$ lines. When we shift $l_1$ and $r$ lines, we shift them as in table \ref{behaviorTable1}, and we choose the vectors coupling to $l_1$ as $\epsilon_{l_1}^-=\eta_1=\lambda_{l_1}\tilde \beta_{l_1}$. At the same time we couple to $l_2$ a vector $\epsilon_{l_2}^-=\eta_2=\lambda_{l_2}\tilde \beta_{l_2}$. For choices of $\epsilon_{r(1)}^-$ and $\epsilon_{r(1)}^\perp$ on $r$ line, the two amplitudes:
\beq
\mathcal{M}_{\mu_r\mu_{l_1}\mu_{l_2}} \epsilon_{l_1}^{-\,\mu_{l_1}} \epsilon_{l_2}^{-\,\mu_{l_2}} \epsilon_{r(1)}^{-\,\mu_r} \ \ \mbox{and}\ \ \mathcal{M}_{\mu_r\mu_{l_1}\mu_{l_2}} \epsilon_{l_1}^{-\,\mu_{l_1}} \epsilon_{l_2}^{-\,\mu_{l_2}} \epsilon_{r(1)}^{\perp\,\mu_r}
\label{decoherence1}
\eeq
are of $\mathcal{O}(z^{-1})$, and can be reduced to less point amplitudes using BCFW technique. The subscript $(1)$ in $\epsilon_{r(1)}^-$ or $\epsilon_{r(1)}^\perp$ means that it is for $l_1-r$ shifting. For the same reason when we shift $l_2$ and $r$-lines, we also obtain two amplitudes:
\beq
\mathcal{M}_{\mu_r\mu_{l_1}\mu_{l_2}} \epsilon_{l_1}^{-\,\mu_{l_1}} \epsilon_{l_2}^{-\,\mu_{l_2}} \epsilon_{r(2)}^{-\,\mu_r} \ \ \mbox{and}\ \ \mathcal{M}_{\mu_r\mu_{l_1}\mu_{l_2}} \epsilon_{l_1}^{-\,\mu_{l_1}} \epsilon_{l_2}^{-\,\mu_{l_2}} \epsilon_{r(2)}^{\perp\,\mu_r}
\label{decoherence2}
\eeq
that are of $\mathcal{O}(z^{-1})$, and can be reduced to less point amplitudes using BCFW technique. In the four amplitudes of \eref{decoherence1} and \eref{decoherence2}, the vectors $\epsilon_r$ coupling to $r$-line are correlated with the vectors coupling to $l_1$ or $l_2$, thus we cannot act on $l_1$ or $l_2$ with their little group generators to obtain other components of the amplitudes. However, from the four amplitudes we can solve out $\mathcal{M}_{\mu_r\mu_{l_1}\mu_{l_2}} \epsilon_{l_1}^{-\,\mu_{l_1}} \epsilon_{l_2}^{-\,\mu_{l_2}}$, such that we can couple $\epsilon_r$ to $r$-line independent of the vectors $\epsilon_{l_1}$ and $\epsilon_{l_2}$ in four dimensional spacetime. Then we can act on $\mathcal{M}_{\mu_r\mu_{l_1}\mu_{l_2}} \epsilon_{l_1}^{-\,\mu_{l_1}} \epsilon_{l_2}^{-\,\mu_{l_2}}$ with the little group generators for $l_1$ and $l_2$ lines, and get all of $\mathcal{M}_{\mu_r\mu_{l_1}\mu_{l_2}} \epsilon_{l_1}^{i\,\mu_{l_1}} \epsilon_{l_2}^{j\,\mu_{l_2}}$ with $i,j\in\{-,\perp,+\}$. Together with the longitudinal components which have been reduced to less point amplitudes in \eref{longitudinalcomp}, we have set up a BCFW recursion relation for general off shell amplitudes.

Several supplements for the above procedure. First, if for some special cases, \eref{decoherence1} and \eref{decoherence2} cannot determine $\mathcal{M}_{\mu_r\mu_{l_1}\mu_{l_2}} \epsilon_{l_1}^{-\,\mu_{l_1}} \epsilon_{l_2}^{-\,\mu_{l_2}}$, we can replace either $l_1-r$ shift or $l_2-r$ shift as in Table \ref{behaviorTable2}. Second, when one of the shifted lines is on shell, we can get the $\epsilon^-$ and $\epsilon^+$ components on this on shell line using the above procedure, and the momentum component from \eref{longitudinalcomp}. These components are sufficient for an on shell line. Third, in the above procedure, we required $l_1$ and $l_2$ both non-adjacent to $r$ line. Actually for the procedure to work, we only need three amplitudes which can be reduced by BCFW technique, with the fourth amplitude from \eref{longitudinalcomp}. From Table \ref{behaviorTable1} or \ref{behaviorTable2}, we can see that a non-adjacent shift plus an adjacent shift is already enough for the procedure to work, which means that our procedure works from 4 point level.

In conclusion, with the proper boundary behaviors to be discussed in the following sections, and using the little group techniques in \cite{Chen3}, BCFW recursion relation can be generalized to calculate general tree level amplitudes with any number of off shell lines.

\section{Amplitudes with Reduced Vertexes}\label{Sec:Reduce}

In this section we are going to introduce some reduced vertexes for the ordinary color ordered Yang-Mills vertexes, and prove that amplitudes constructed from the reduced vertexes have the same boundary behaviors as those constructed from ordinary vertexes.


We first clarify some conventions for the rest of this article. If we draw the complex momentum line from left to right, other external legs besides the shifted pair would be either above or below this complex line.  For a given shift, the set of external legs above(or below) the complex line is fixed together with their order, however the legs above the complex line and those below it can have all possible relative positions. To further specify the vertexes, we sort the vertexes as in Figure \ref{vertexclass}.
\begin{figure}[htb]
\centering
\includegraphics{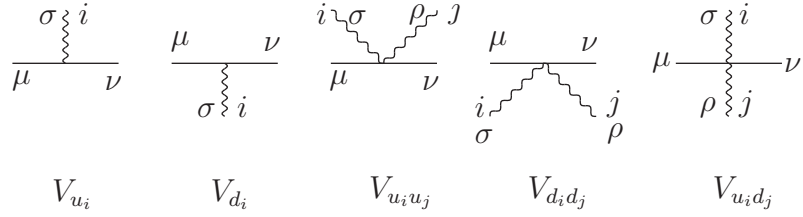}
\caption{A classification of the vertexes. The horizontal line is the complex line and the photon lines are external legs besides the shifted pair. i and j are the numbering of the external states, and $\mu$, $\nu$, $\sigma$, $\rho$ are the indices.}
\label{vertexclass}
\end{figure}

For a three-point vertex with line 1, 2 and 3 in anti-clockwise order, we write it in the following form:
\bea\label{newV3}
V_{\mu_1\mu_2\mu_3}&\equiv&S_{\mu_1\mu_2\mu_3}+ R_{\mu_1\mu_2\mu_3}+M_{\mu_1\mu_2\mu_3},
\eea
where  
\bea\label{newV3s}
S_{\mu_1\mu_2\mu_3}&=&\frac{i}{\sqrt 2}\left(g_{\mu_1\mu_2}(k_1-k_2)_{\mu_3}\right) \nb\\
R_{\mu_1\mu_2\mu_3}&=&\frac{i}{\sqrt 2}\left(-2g_{\mu_2\mu_3}(k_3)_{\mu_1}+2g_{\mu_3\mu_1}(k_3)_{\mu_2}\right) \nb\\
M_{\mu_1\mu_2\mu_3}&=&\frac{i}{\sqrt 2}\left(-g_{\mu_2\mu_3}(k_1)_{\mu_1}+g_{\mu_3\mu_1}(k_2)_{\mu_2}\right). 
\eea
In this manner, $k_3$ is in a special role and we will choose the appropriate one as $k_3$ in specific situations. When the lines 1 and 2 are on the complex line and 3 is an external leg, we further divide the M term into $M^L$ and $M^R$ as represented in Figure \ref{Msymbol}.
\begin{figure}[]
\centering
\includegraphics[height=4cm,width=12cm]{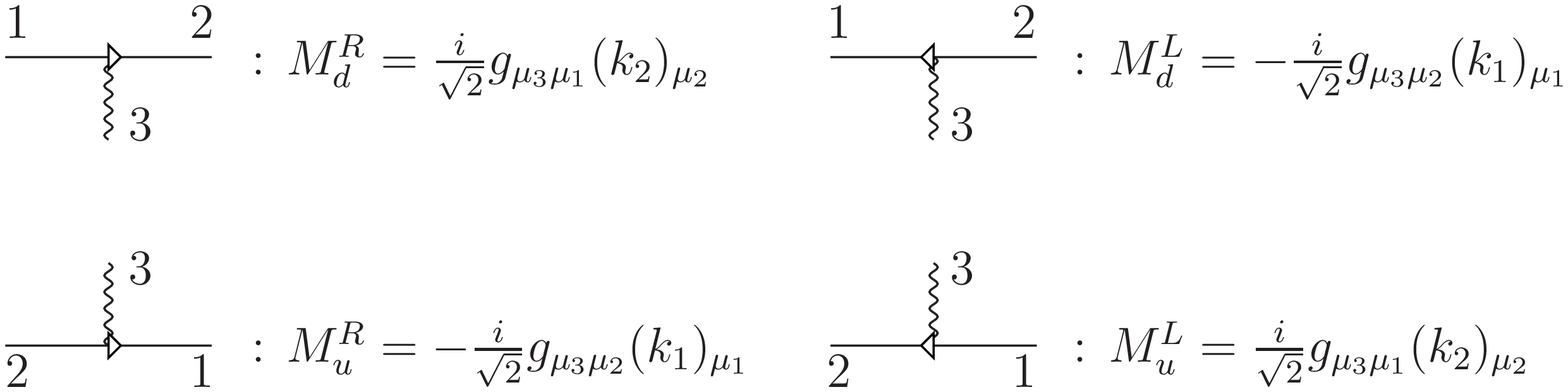}
\caption{The symbols and meanings of $M_{u/d}^{L/R}$.}
\label{Msymbol}
\end{figure}

Contracting a three point vertex $V_{\mu_1\mu_2\mu_3}$ with $k_3^{\mu_3}$, we get:
\beq
k_3^{\mu_3} \cdot V_{\mu_1\mu_2\mu_3}=\frac{i}{\sqrt{2}}g_{\mu_1\mu_2} k_2^2-\frac{i}{\sqrt{2}}g_{\mu_1\mu_2} k_1^2-\frac{i}{\sqrt{2}}k_{2\ \mu_2}k_2{}_{\ \mu_1}+\frac{i}{\sqrt{2}}k_{1\ \mu_1}k_1{}_{\ \mu_2},
\label{kdotV}
\eeq
and we represent these terms by the symbols in Figure \ref{vertexnotation}.
\begin{figure}[htb]
\centering
\includegraphics[height=4cm,width=10cm]{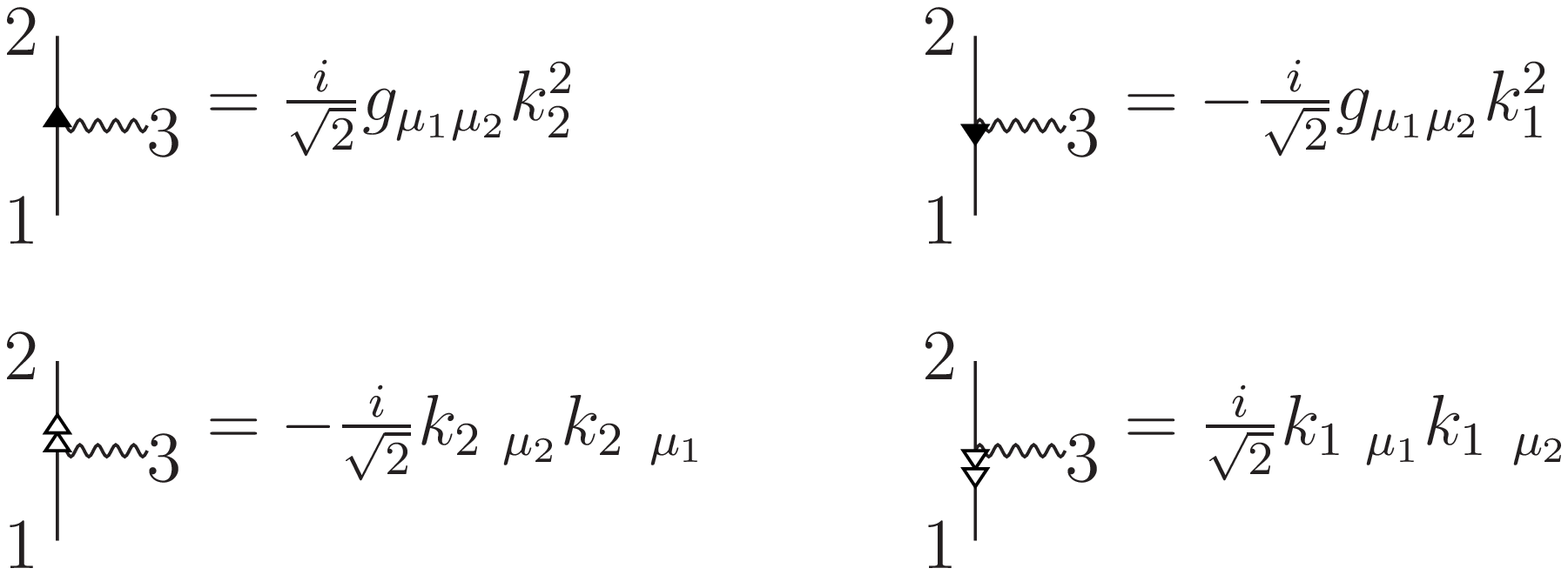}
\caption{Notations for \eref{kdotV}.}
\label{vertexnotation}
\end{figure}

In the following of this paper, the method of induction is assumed. For example, when we discuss the $\mathcal{O}(z^1)$, $\mathcal{O}(z^0)$ and $\mathcal{O}(z^{-1})$ behaviors of N point amplitudes, we only need to consider the diagrams with all the external legs attaching the complex line. When some of these external legs form vertexes outside the complex line, it is not changed whether the shift is adjacent or non-adjacent, and the conclusions for less external leg amplitudes apply to these diagrams when we do not require the external legs to be on shell.


\subsection{Reduced Vertexes}
The central conclusion of this subsection is that the boundary behaviors for BCFW momenta shift \eref{momshift} under the conditions in \eref{conditionEta} and \eref{conditionVector} can be obtained by using the reduced vertexes  as following:
\bea
{\bar V}_{u/d}&=&S_{u/d}+R_{u/d},\nonumber\\
{\bar V}_{u_i u_j}&=&\frac{i}{2}(2 g^{\nu\sigma}g^{\mu\rho}-2g^{\mu\sigma}g^{\nu\rho}-g^{\sigma\rho}g^{\mu\nu}),\nonumber\\
{\bar V}_{d_i d_j}&=&\frac{i}{2}(2g^{\nu\sigma}g^{\mu\rho}-2g^{\mu\sigma}g^{\nu\rho}-g^{\sigma\rho}g^{\mu\nu}),\nonumber\\
{\bar V}_{u_i d_j}&=&ig^{\sigma\rho}g^{\mu\nu}.
\label{ReduV}
\eea
The meanings of the vertex names, the external legs and their indices refer to Figure \ref{vertexclass}, and the meanings of S term and R term in the first line refer to \eref{newV3s} with the external leg playing the role of Line 3. 

We first prove some useful lemmas. First, for a tree level tensor current $\mathcal{M}_{12\cdots N}^{\mu_1\mu_2\cdots \mu_N}$, we shift $k_i$ and $k_j$: $\hat k_i\to k_i+z\eta$ and $\hat k_j\to k_j-z\eta$ with $\eta^2=0$ and $k_i\cdot \eta=0$. We couple $\hat \epsilon_i$ to the $\hat k_i$ line with $\hat k_i \cdot \hat \epsilon_i=0$. If $\hat \epsilon_i\sim \mathcal{O}(z^{n_i})$, naive power counting gives ${\hat k}_{j\ \mu_j} \hat{\mathcal{M}}_{12\cdots N}^{\mu_1\mu_2\cdots \mu_N} \hat \epsilon_{i\ \mu_i}\sim \mathcal{O}(z^{2+n_i})$. However, we have:

\begin{lemma}{Generalized Ward Identity 1}
\\${\hat k}_{j\ \mu_j} \hat{\mathcal{M}}_{12\cdots N}^{\mu_1\mu_2\cdots \mu_N} \hat \epsilon_{i\ \mu_i}\sim \mathcal{O}(z^{n_i})$, for $\hat k_i\to k_i+z\eta$ and $\hat k_j\to k_j-z\eta$ with $\eta^2=0$, $k_i\cdot \eta=0$ and $\hat k_i \cdot \hat \epsilon_i=0$.
\label{GWI1}
\end{lemma}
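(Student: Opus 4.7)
My approach is to prove the lemma by induction on $N$, using the generalized off-shell Ward identity \eref{longitudinalcomp} as the recursion step and the three-vertex identity \eref{kdotV} for the base case. The central observation enabling the argument is that the shift conditions \eref{conditionEta} give $\hat k_i^2 = k_i^2 + 2z\,k_i\!\cdot\!\eta + z^2\eta^2 = k_i^2$, so the shifted leg's invariant is $z$-independent; trivially, every unshifted $\hat k_a^2 = k_a^2$ is $z$-independent as well. This is what removes what would otherwise be $\mathcal{O}(z)$ enhancements in the Ward reduction.

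For the base case $N=3$, I would apply \eref{kdotV} with $\hat k_j$ playing the role of $k_3$. Of the four tensor structures appearing on the right-hand side, the two carrying $\hat k_a^2$ prefactors are $z$-independent by the observation above; the $\hat k_{i\,\mu_i}\hat k_{i\,\mu_b}$ piece is annihilated upon contraction with $\hat\epsilon_i$ thanks to $\hat k_i\cdot\hat\epsilon_i=0$; and the final piece involves only an unshifted momentum contracted with $\hat\epsilon_i$. Each surviving contribution is therefore at most $\mathcal{O}(z^{n_i})$.

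For the inductive step, I would apply \eref{longitudinalcomp} to $\hat k_{j\,\mu_j}\hat{\mathcal M}^{\mu_1\cdots\mu_N}$ after a cyclic relabeling that places $\hat k_j$ in the role of $k_{N+1}$. The right-hand side splits into contact terms carrying a $z$-independent $k_a^2$ prefactor times an $(N-1)$-point current, and factorized pole terms of the schematic form $\hat p_\rho\,\hat{\mathcal M}^{\ldots\rho}_L\,\hat q_\sigma\,\hat{\mathcal M}^{\ldots\sigma}_R /(\hat K_L^2\,\hat K_R^2)$ with $\hat p,\hat q\in\{\hat K,\hat k_j\}$. The contact terms, contracted with $\hat\epsilon_i$, give $\mathcal{O}(z^{n_i})$ by the inductive hypothesis applied to the $(N-1)$-point current. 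For a factorized term in which the sub-amplitude containing $\hat k_i$ is contracted with its own internal momentum $\hat K = K + z\eta$, the inductive hypothesis applies directly with the internal leg playing the role of the new $\hat k_j$, yielding $\mathcal{O}(z^{n_i})$ for that piece; the complementary sub-amplitude has no shifted external leg, and its $\mathcal{O}(z^1)$ contracting momentum is compensated by the $\mathcal{O}(z^{-1})$ coming from the propagator $1/\hat K^2 = 1/(K^2 + 2zK\cdot\eta)$.

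The main obstacle is the remaining pole configuration in which the sub-amplitude containing $\hat k_i$ is contracted not with its own internal momentum but with $\hat k_{N+1}=\hat k_j$ itself. Since $\hat k_j$ is not one of that sub-amplitude's own momenta, the inductive hypothesis does not apply verbatim. I would handle this by momentum conservation inside the sub-amplitude, writing $\hat k_j = -\hat K_{\mathrm{int}} - K_{\mathrm{rest}}$ where $K_{\mathrm{rest}}$ is an unshifted sum of external momenta: the $\hat K_{\mathrm{int}}$ piece then becomes a genuine Ward-identity contraction to which the inductive hypothesis applies, while the $K_{\mathrm{rest}}$ piece is a contraction of the sub-amplitude against a $z$-independent momentum, yielding an ordinary tensor current contracted on one index whose $z$-scaling is then controlled by power counting sharpened by the reduced-vertex structure \eref{ReduV}.
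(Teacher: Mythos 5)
Your proposal reaches the correct conclusion, and it does so by a route that is organized genuinely differently from the paper's, although both rest on the same Ward--identity cancellation. The paper proves \lref{GWI1} by re-running, with $z$-power bookkeeping, the diagrammatic argument that establishes \eref{longitudinalcomp} in the first place: it inserts the $j$-leg into an $N$-point current in all positions, decomposes $\hat k_j\cdot V_j$ via \eref{kdotV}, cancels the $g_{\mu_1\mu_2}k_a^2$ (``solid triangle'') pieces against the vertex insertions as in Figure \ref{treecancel2}, and power-counts the surviving $k_{a\,\mu}k_{a\,\nu}$ pieces, which are down by two orders --- the $\hat k_{i\,\mu_i}\hat k_{i\,\mu_b}$ piece dying on $\hat k_i\cdot\hat\epsilon_i=0$ exactly as in your base case. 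You instead take \eref{longitudinalcomp} as an established algebraic identity (valid for the shifted momenta, since it is a rational identity in momenta subject only to momentum conservation) and power-count its right-hand side term by term. The price of your organization is the cross terms in which the sub-current containing leg $i$ is contracted with $\hat k_j$ rather than with its own off-shell momentum; your split $\hat k_j=-\hat K_{\mathrm{int}}-K_{\mathrm{rest}}$ does dispose of them: the $\hat K_{\mathrm{int}}$ piece is the inductive hypothesis for the sub-current, and the $K_{\mathrm{rest}}$ piece is $\mathcal{O}(z^{1+n_i})$ by naive power counting and is brought down to $\mathcal{O}(z^{n_i})$ by the $\mathcal{O}(z^{-1})$ of the shifted propagator $1/\hat K_{\mathrm{int}}^2$ sitting in front of it. Two repairs are needed. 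First, drop the appeal to the reduced vertexes \eref{ReduV} in the last step: \tref{redamplitude}, which justifies them, is itself proved from \lref{GWI1} and \lref{GWI2}, so that appeal is circular --- and it is unnecessary, since the plain power counting just described already closes the step. Second, the contact terms (the first two lines of \eref{longitudinalcomp}) are not instances of the inductive hypothesis, which concerns Ward-type contractions only; they are controlled either because $\hat k_i+\hat k_j=k_i+k_j$ is unshifted (when leg $i$ is the leg adjacent to $j$ that gets absorbed) or by the naive $\mathcal{O}(z^{1+n_i})$ of the shifted sub-current against the $\mathcal{O}(z^{-1})$ of the prefactor $k_1^2/(\hat k_1+\hat k_j)^2$. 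With those adjustments your argument is complete, and correctly isolates the fact that only $k_i\cdot\eta=0$ (not $k_j\cdot\eta=0$) is needed.
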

\begin{proof}
The proof can be done by induction, similar to the proof of actual tree-level Ward identity in our other papers \cite{Chen1,Chen2}. For three point tensor currents this Lemma can be verified directly. Assume it holds for no more than N point tensor currents. We construct an (N+1) point tensor current by inserting $k_j$ into an N-point one. Those diagrams with some external legs not attaching the complex line directly need not be considered since they apply the results for no more than N point tensor currents.

When $k_j$ is inserted on a propagator or external leg to form a three vertex $V_j$, we use the notations in Figure \ref{vertexnotation} to decompose $k_j\cdot V_j$. Among the four terms, the first line two terms, ie. solid triangle terms, plus the terms from $k_j$ inserted to a three point vertex in the N point diagram cancel as in Figure \ref{treecancel2}.
\begin{figure}[]
\centering
\includegraphics{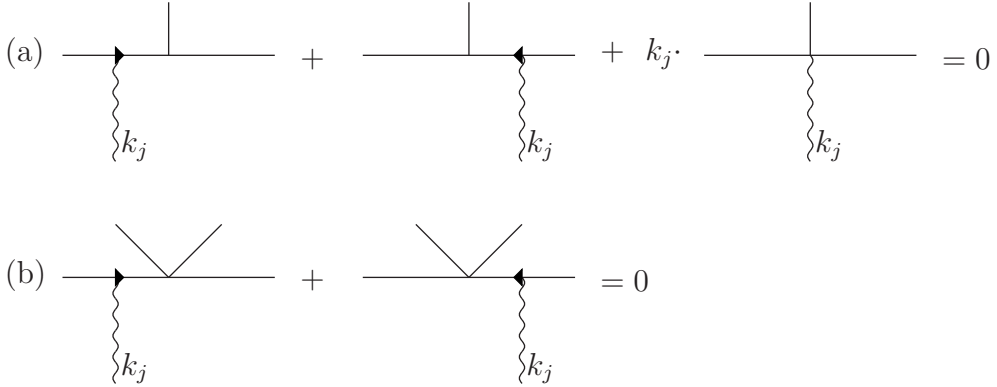}
\caption{Groups of terms that cancel. These terms cancel solely due to the vertex, without any on shell conditions on the legs. Only the $k_j$ line as in \lref{GWI1} are specially represented by photon line. This Figure is from \cite{Chen2}.} 
\label{treecancel2}
\end{figure}

Then the remaining terms are the second line double hollow triangle terms in Figure \ref{vertexnotation} when $k_j$ is inserted to a propagator or external leg in the original N point diagram. Then by direct power counting or the use of the induction assumption, it is seen that the order of z are decreased by at least 2. Thus, we have proven that for N+1 point amplitude, the order of z for ${\hat k}_{j\ \mu_j} \hat{\mathbf{M}}_{12\cdots N}^{\mu_1\mu_2\cdots \mu_N} \hat \epsilon_{i\ \mu_i}$ are decreased by at least 2 from naive power counting, finishing the proof for \lref{GWI1}. \endofproof
\end{proof}

\begin{lemma}{Generalized Ward Identity 2}
\\${\hat k}_{j\ \mu_j} {\hat k}_{i\ \mu_i} \mathcal{M}_{12\cdots N}^{\mu_1\mu_2\cdots \mu_N}\sim \mathcal{O}(z^1)$, for a shift: $\hat k_i\to k_i+z\eta$ and $\hat k_j\to k_j-z\eta$ with $\eta^2=0$.
\label{GWI2}
\end{lemma}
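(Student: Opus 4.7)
I would mirror the inductive strategy used for Lemma 1, now tracking the simultaneous contraction with both shifted momenta. The induction is on the number $N$ of external legs of the current.

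For the base case $N=3$, I would directly evaluate $\hat k_j^{\mu_j}\hat k_i^{\mu_i}V_{\mu_i\mu_j\mu_3}(\hat k_i,\hat k_j,k_3)$ with the three-point vertex \eref{newV3s}, using total-momentum conservation $\hat k_i+\hat k_j+k_3=0$ and $\eta^2=0$. A straightforward expansion shows that the naively leading $\mathcal{O}(z^2)$ piece, proportional to $\eta_{\mu_3}$, cancels identically between the three terms of the color-ordered three-gluon vertex, leaving $\mathcal{O}(z^1)$.

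For the inductive step, assume the bound for all currents with $\le N$ legs. An $(N+1)$-point current is built by inserting the $k_j$ leg into an $N$-point current, either onto a propagator or external leg (producing a new three-vertex $\hat V_j$) or onto an existing three-vertex (producing a four-vertex). When $\hat k_j^{\mu_j}$ contracts $\hat V_j$, the decomposition \eref{kdotV} splits the result into the four terms of Figure \ref{vertexnotation}. The two solid-triangle terms cancel exactly against the contribution of $\hat k_j$ inserted onto a pre-existing three-vertex of the $N$-point current, by the diagrammatic identity of Figure \ref{treecancel2}. This cancellation is a purely algebraic identity local to the $\hat k_j$ insertion site, and it is independent of the $\hat k_i^{\mu_i}$ contraction on the distant $i$-line, which passes through as a spectator. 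Only the two double-hollow-triangle terms survive; each carries either a propagator-cancelling $k^2$ or a pair $k_\mu k_\nu$ contracted with neighbouring metrics, and in both cases the naive $z$-scaling of the surviving $\hat k_i^{\mu_i}$-contracted tensor current is lowered by two powers. Starting from the naive $\mathcal{O}(z^3)$ of $\hat k_j^{\mu_j}\hat k_i^{\mu_i}\hat{\mathcal M}$, this reduction yields $\mathcal{O}(z^1)$, and the induction hypothesis applied to the reduced current closes the step.

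The main obstacle will be a careful verification that the Figure \ref{treecancel2} cancellation is indeed untouched by the presence of $\hat k_i^{\mu_i}$ on the opposite line of the diagram. Because that identity is strictly local at the $\hat k_j$ insertion and does not involve indices or momenta on the $i$-line, locality should suffice, but since \lref{GWI2} drops the assumption $k_i\cdot\eta=0$ that was used for \lref{GWI1}, I would need to recheck that no step in the original cancellation implicitly relied on that condition; once this is established the induction proceeds mechanically.
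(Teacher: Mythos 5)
Your plan is correct and coincides with the paper's own (very brief) treatment, which simply asserts that \lref{GWI2} follows by the same induction as \lref{GWI1}: the Figure \ref{treecancel2} cancellation is purely local at the $\hat k_j$ insertion and never invokes $k_i\cdot\eta=0$, so replacing $\hat\epsilon_i$ by $\hat k_i\sim\mathcal{O}(z)$ still yields a two-order reduction from the naive $\mathcal{O}(z^3)$. Your explicit $N=3$ base case and the check that dropping $k_i\cdot\eta=0$ is harmless are details the paper omits but are consistent with its argument.
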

In this Lemma, no on shell condition is placed on leg i or j. ${\hat k}_{j\ \mu_j} {\hat k}_{i\ \mu_i} \mathcal{M}_{12\cdots N}^{\mu_1\mu_2\cdots \mu_N}\sim \mathcal{O}(z^3)$ by naive power counting, yet decreased by 2 orders of z. This Lemma can also be proved by induction with the same procedure as the proof for the above Lemma.

With the above two Lemmas, we are ready to prove our central conclusion \tref{redamplitude} of this subsection.

For each diagram the vertexes in it are \{$V_{u_i}$, $V_{d_j}$, $V_{u_i u_{i+1}}$, $V_{d_j d_{j+1}}$, $V_{u_i d_j}$\}, determined by the different orderings of the external legs. We denote this diagram as $\mathcal{M}^{\mu\nu}(\{u_i,d_j, u_i u_{i+1}, d_j d_{j+1}, u_i d_j\})$. In the rest of the article, and also for \eref{behaviorOff}, when we talk about $\hat{\mathcal{M}}^{\mu\nu}(\{u_i,d_j, u_i u_{i+1}, d_j d_{j+1}, u_i d_j\})$ with $\mu$ and $\nu$ indices not contracted with other tensors, we will always assume it contracted with $\hat \epsilon^\mu_l\sim \mathcal{O}(z^{n_l})$ and $\hat \epsilon^\nu_r\sim \mathcal{O}(z^{n_r})$, which satisfy $\hat k_l \cdot \hat \epsilon_l=0$ and $\hat k_r \cdot \hat \epsilon_r=0$, and we will not write $\hat \epsilon^\mu_l$ and $\hat \epsilon^\nu_r$, and suppress $n_l+n_r$ in the order z analysis of the amplitudes.

\begin{theorem} 
For the shift of a pair of momenta $\hat k_l^\mu=k_l^\mu+z \eta^\mu$ and $\hat k_r^\nu=k_r^\nu-z\eta^\nu$, the amplitude at large z has the property:
\beq
\hat{\mathcal{M}}^{\mu\nu}(\{u_i,d_j, u_i u_{i+1}, d_j d_{j+1}, u_i d_j\})=\hat{\mathcal{M}}^{\mu\nu}(\{\overline{u_i},\overline{d_j}, \overline{u_i u_{i+1}}, \overline{d_j d_{j+1}}, \overline{u_i d_j}\})+\mathcal{O}(z^{-1}).
\eeq
\label{redamplitude}
\end{theorem}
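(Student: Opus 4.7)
The plan is to prove \tref{redamplitude} by induction on the number of external legs attaching to the complex line, replacing one full vertex at a time by its reduced counterpart and bounding the remainder. Writing each vertex as $V = \bar V + \Delta V$, it is enough to show that any single-vertex replacement $V \to \bar V$ inside an otherwise full diagram produces an error of order $\mathcal{O}(z^{-1})$; performing this swap at every vertex in turn then yields the stated equality. The induction base (three-point amplitudes) follows from direct inspection of \eref{newV3s}, while the inductive step relies on the two generalized Ward identities \lref{GWI1} and \lref{GWI2} already established.

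For a three-point vertex $V_{u/d}$ with two complex-line legs and one external leg, the omitted piece is $\Delta V = M_{u/d}$. By \eref{newV3s}, each term in $M_{u/d}$ carries a factor $(k_i)_{\mu_i}$ where $k_i$ is a momentum on the complex line. When this factor is sewn through the adjacent propagator $-i g^{\alpha\beta}/\hat k^2$, it becomes a longitudinal contraction of (part of) $\hat k$ into the lower-point sub-current sitting on that side of the vertex. By \lref{GWI1} and \lref{GWI2}, such longitudinal contractions lower the naive $z$-scaling of the sub-current by two. Tracking the remaining factors, namely the $1/\hat k^2 \sim z^{-1}$ from the adjacent propagator partially compensating the $\hat k\sim z$ in the momentum factor, gives the desired net $\mathcal{O}(z^{-1})$ suppression of the $M_{u/d}$ contribution relative to the $\bar V_{u/d}$ one.

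For the four-point vertices $V_{u_i u_{i+1}}$, $V_{d_j d_{j+1}}$, and $V_{u_i d_j}$, the analogous $\Delta V$ pieces do not immediately factor through a single propagator. The plan is to group each such $\Delta V$ together with the $M^{L}$ or $M^{R}$ pieces coming from the two neighboring three-point vertices on the complex line, exactly along the cancellation pattern of Figure \ref{treecancel2} used in the proof of \lref{GWI1}. After this grouping, the surviving tensor structure in every group reorganizes as a single longitudinal momentum injected into a lower-point sub-current, to which the induction hypothesis and \lref{GWI1}, \lref{GWI2} apply, again giving $\mathcal{O}(z^{-1})$.

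The main obstacle is the pair of vertices that sit directly on the shifted legs $\hat k_l$ and $\hat k_r$, because there the momentum factor inside $\Delta V$ does not pass through an internal propagator but meets the external polarization $\hat\epsilon_l$ or $\hat\epsilon_r$. This is where the transversality conditions \eref{conditionVector} enter decisively: since $\hat k_l = k_l + z\eta$ and $\hat k_r = k_r - z\eta$ with $\eta$ obeying \eref{conditionEta}, the relations $\hat k_l\cdot\hat\epsilon_l = \hat k_r\cdot\hat\epsilon_r = 0$ force the would-be large-$z$ piece of $k_l\cdot\hat\epsilon_l$ and $k_r\cdot\hat\epsilon_r$ to be at most $\mathcal{O}(z^0)$ rather than the naive $\mathcal{O}(z)$. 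This additional suppression exactly plugs the hole left by the absence of an internal propagator in the boundary case, and the induction closes, establishing \tref{redamplitude}.
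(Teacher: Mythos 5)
Your overall strategy---write $V=\bar V+\Delta V$ and kill the $\Delta V$ (i.e.\ $M$-term) contributions with the generalized Ward identities \lref{GWI1} and \lref{GWI2}---is the same starting point as the paper's proof, but your central claim, that \emph{each single-vertex replacement error is separately $\mathcal{O}(z^{-1})$ so the swaps can be performed one at a time}, has a genuine gap. The suppression of $M$-terms is not a one-vertex-at-a-time phenomenon: after expanding the product of vertexes as in \eref{expansion}, a term with a single $M$ factor is indeed suppressed (it cannot simultaneously be a leftmost $M^{R}$ and a rightmost $M^{L}$), but the cross terms with two or more $M$ factors are not all small. In particular, when an $M^{R}_{(u/d)_{i+1}}$ and an $M^{L}_{(u/d)_{i+2}}$ sit on adjacent vertexes, the two longitudinal momentum factors contract through the intermediate propagator and cancel it, $k_{\mu}(-ig^{\mu\nu}/k^{2})k_{\nu}=-i$, producing an effective contact term that contributes at $\mathcal{O}(z^{0})$ of the amplitude --- diagram (c) of Figure \ref{Mtermcancel}. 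These surviving pairs are precisely what must be added to the diagrams with one original four-point vertex (diagram (d)) to produce the \emph{reduced} four-point vertexes of \eref{ReduV}; this is why $\bar V_{u_id_j}=ig^{\sigma\rho}g^{\mu\nu}$ and $\bar V_{u_iu_j}$, $\bar V_{d_id_j}$ are \emph{not} the original color-ordered four-gluon vertexes. If every replacement error were $\mathcal{O}(z^{-1})$ as you assert, the theorem would hold with the original four-point vertexes, which is false.

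Concretely, your third paragraph asserts that after grouping the four-point $\Delta V$ with neighboring $M$ pieces "along the cancellation pattern of Figure \ref{treecancel2}" everything reorganizes into a suppressed longitudinal injection. That conflates two different mechanisms: Figure \ref{treecancel2} is the cancellation used \emph{inside} the proof of \lref{GWI1} (solid-triangle terms against four-point insertions), whereas the relevant combination here is (c)$+$(d) of Figure \ref{Mtermcancel}, whose sum is a finite $\mathcal{O}(z^{0})$ redefinition of the four-point vertex, not an $\mathcal{O}(z^{-1})$ remainder. What is missing from your argument is the bookkeeping of the paper's Steps 2--3: order the $M$ factors along the complex line, show via \lref{GWI1} that the first must be $M^{R}$ and the last $M^{L}$, apply \lref{GWI2} to the innermost $M^{R}\cdots M^{L}$ pair when vertexes separate them, and isolate the adjacent-pair exception that feeds into the reduced four-point vertexes. (A minor further imprecision: at the endpoint vertexes the relevant contraction is $\hat k_{l}\cdot\hat\epsilon_{l}$, which vanishes exactly by \eref{conditionVector}, not merely down to $\mathcal{O}(z^{0})$.)
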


$(\{\overline{u_i},\overline{d_j}, \overline{u_i u_{i+1}}, \overline{d_j d_{j+1}}, \overline{u_i d_j}\})$ means that the vertexes are the reduced vertexes $(\{\bar V_{u_i},\bar V_{d_j}, \bar V_{u_i u_{i+1}}, \bar V_{d_j d_{j+1}}, \bar V_{u_i d_j}\})$ respectively. The highest possible scaling behavior for $\hat{\mathcal{M}}^{\mu\nu}(\{u_i,d_j, u_i u_{i+1}, d_j d_{j+1}, u_i d_j\})$ is $\mathcal{O}(z^1)$, and this theorem says that the first two orders are determined by the reduced vertexes. The reduced vertexes refer to \eref{ReduV}.

\begin{proof}
{\bf Step 1.} We notate a diagram by the positions of the vertexes from left to right on the complex line. Using $V_{u/d}=\bar V_{u/d}+M_{u/d}^L+M_{u/d}^R$, we have:
\bea
&&{\bar V}_{(u/d)_1} {\bar V}_{(u/d)_2}\cdots{\bar V}_{(u/d)_n}\\
=&&(V_{(u/d)_1}-M_{(u/d)_1}^L-M_{(u/d)_1}^R)(V_{(u/d)_2}-M_{(u/d)_2}^L-M_{(u/d)_2}^R)\cdots(V_{(u/d)_n}-M_{(u/d)_n}^L-M_{(u/d)_n}^R),\nonumber
\eea
and by expanding it we get:
\bea
&&V_{(u/d)_1} V_{(u/d)_2}\cdots V_{(u/d)_n}\nonumber\\
=&&{\bar V}_{(u/d)_1} {\bar V}_{(u/d)_2}\cdots{\bar V}_{(u/d)_n}\nonumber\\
+&&\sum_i (-1)^{i-1} \mbox{(i vertexes are replaced with their M term components}).
\label{expansion}
\eea
For diagrams containing four point vertexes, we only re-express the three point vertexes therein without any change to four point vertexes at this step, and then do the similar expansion as in \eref{expansion}.

{\bf Step 2.} In this step, we prove that for each term in \eref{expansion}, in order to contribute at $\mathcal{O}(z^1)$ and $\mathcal{O}(z^0)$, the last M factor in the term should be $M_{u/d}^L$, and for the same reason the first M factor should be $M_{u/d}^R$. This is clearly shown in (a) in Figure \ref{Mtermcancel}. 

\begin{figure}[]
\includegraphics[height=12cm,width=12cm]{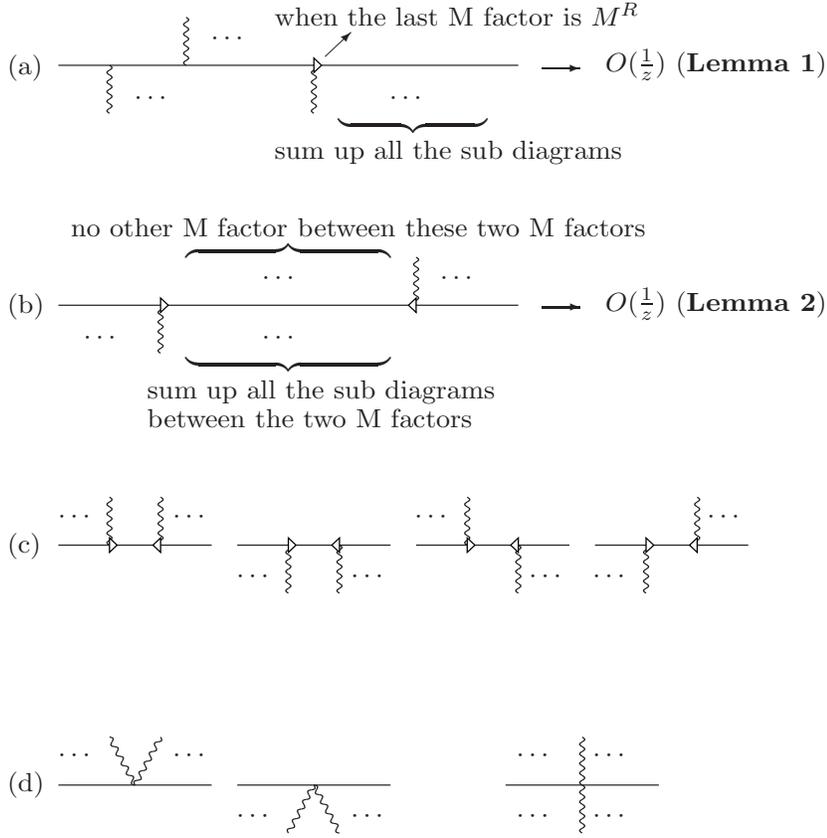}~~
\caption{In (a), naive power counting tells us that any diagram is $\mathcal{O}(z^{1})$, yet the sum turns out to be $\mathcal{O}(z^{-1})$ since \lref{GWI1} tells us that for this sum the actual z dependence is at least lowered by 2 orders compared to naive power counting. The same manner works for the case when the last M factor is $M_u^R$ instead of $M_d^R$, and also works for the analysis of the first M factor in the terms of \eref{expansion}. In (b) \lref{GWI2} works when and only when there are some vertexes between the two M factors. When the two M factors are next to each other, the contributions are shown in (c), which escape \lref{GWI2} and may contribute to $\mathcal{O}(z^0)$. The vertexes besides these two M terms are summed up to be reduced three point vertexes as explained in {\bf Step 3}. (d) gives the corresponding terms which add up with (c) to replace the four point vertexes with the reduced ones.}
\label{Mtermcancel}
\end{figure}

{\bf Step 3.} From step 2 we know that for contributions at $\mathcal{O}(z^{1})$ and $\mathcal{O}(z^0)$, for all the terms containing at least one M factor in \eref{expansion}, we only need to consider the terms where the last M factor is $M_{u/d}^L$ and the first M factor is $M_{u/d}^R$. For such terms, there clearly exists a pair of M factors $M_{u/d}^R$ and $M_{u/d}^L$, where $M_{u/d}^R$ is on the left of $M_{u/d}^L$ and there are no other M factors between them. This is represented in (b) of Figure \ref{Mtermcancel}. Due to \lref{GWI2} these terms do not contribute to $\mathcal{O}(z^{1})$ and $\mathcal{O}(z^0)$, except the special terms represented in (c) of Figure \ref{Mtermcancel}, where the $M_{u/d}^R$ and $M_{u/d}^L$ are next to each other. For the terms in (c), since the product of the two M terms decrease the order of z by 1, there can be no other four point vertexes at the two sides of the two M terms, in order to contribute to $\mathcal{O}(z^{1})$ and $\mathcal{O}(z^0)$. The terms in (c) add up to be:
\beq
{\bar V}_{(u/d)_1} {\bar V}_{(u/d)_2}\cdots{\bar V}_{(u/d)_i} M_{(u/d)_{i+1}}^R M_{(u/d)_{i+2}}^L{\bar V}_{(u/d)_{i+3}}\cdots{\bar V}_{(u/d)_n},
\eeq
which means that on the two sides of the two M terms all the vertexes are the reduced three point vertexes \eref{ReduV}.

{\bf Step 4.} In the first 3 steps, we have analyzed the terms in \eref{expansion} with at least one M factor, which are reduced to the terms in (c) of Figure \ref{Mtermcancel}. The other terms in \eref{expansion} are either all comprised of reduced three point vertexes, or of reduced three point vertexes plus one and only one four point vertex. The latter case is given in (d) of Figure \ref{Mtermcancel}. (c) and (d) sum up to replace the four point vertex with the reduced one. Thus, we have shown that at $\mathcal{O}(z^{1})$ and $\mathcal{O}(z^0)$, all the terms in \eref{expansion} are reduced either to a product of reduced three point vertexes, or a product of reduced three point vertexes and one reduced four point vertex.\endofproof
\end{proof}

\subsection{Application}\label{theorem1Application}
As a simple application of \tref{redamplitude}, we can directly obtain the large-z scaling behaviors for amplitudes with adjacent BCFW shifts. 

For $\mathcal{M}^{\mu\nu}(\{u_i,d_j, u_i u_{i+1}, d_j d_{j+1}, u_i d_j\})$, we denote the product of all the vertexes in it as $\mathcal{N}^{\mu\nu}(\{u_i,d_j, u_i u_{i+1}, d_j d_{j+1}, u_i d_j\})$, and the product of all the propagators in the complex line in it as $\mathcal{C}(\{u_i,d_j, u_i u_{i+1}, d_j d_{j+1}, u_i d_j\})$.  Here and following, we usually suppress $(\{u_i,d_j, u_i u_{i+1}, d_j d_{j+1}, u_i d_j\})$ for convenience. 
Then  the amplitude is written as 
\bea\label{infyTerm}
\mathcal{M}^{\mu\nu}=\sum_{\mathcal{D}}{\mathcal{N}^{\mu\nu}\over \mathcal{C}} , 
\eea
where the sum is over all the Feynman diagrams.
 
The amplitude can be expanded as $\mathcal{M}^{\mu\nu}=\mathcal{M}_1^{\mu\nu} z + \mathcal{M}_0^{\mu\nu}+\mathcal{M}^{\mu\nu}_{-1} {1\over z}+ \mathcal{O}({1\over z^2})$ in the large $z$ limit. We need to discuss the large-z scaling behaviors for some types of Feynman diagrams. For convenience we denote the types of Feynman diagrams as following: $\mathcal{D}_I$ denotes the diagrams where all vertexes in the complex line are reduced three point vertexes. $\mathcal{D}_{II}$ denotes the diagrams where the complex line contains only one reduced four point vertex which is not $\bar V_{u_id_j}$ and other vertexes are reduced three point vertexes, while in $\mathcal{D'}_{II}$ the four point vertex is $\bar V_{u_id_j}$. In $\mathcal{D}_{III}$, there are two reduced four point vertexes in the complex line neither of which is $\bar V_{u_id_j}$ and other vertexes are reduced three point vertexes, while in $\mathcal{D'}_{III}$ at least one of the four point vertexes is $\bar V_{u_id_j}$. For $\mathcal{M}_1^{\mu\nu}$ and $\mathcal{M}_0^{\mu\nu}$, we only need to take $\mathcal{D}_I$, $\mathcal{D}_{II}$ and $\mathcal{D'}_{II}$ into consideration.

The contribution to the amplitudes from each kind of Feynman diagrams can be expanded respectively as:
\bea
{\mathcal{N}_{h_I}^{\mu\nu} z^{h_I}+ \mathcal{N}_{{h_I}-1}^{\mu\nu} z^{{h_I}-1}+\mathcal{N}_{{h_I}-2}^{\mu\nu}z^{{h_I}-2}\cdots\over \mathcal{C}_{{h_I}-1} z^{{h_I}-1} +\mathcal{C}_{{h_I}-2} z^{{h_I}-2}+\mathcal{C}_{{h_I}-3} z^{{h_I}-3}\cdots} ~&&\text{\ \ for} ~\mathcal{D}_I \\
{\mathcal{N}_{h_{II}}^{\mu\nu} z^{h_{II}}+ \mathcal{N}_{{h_{II}}-1}^{\mu\nu} z^{{h_{II}}-1}+\mathcal{N}_{{h_{II}}-2}^{\mu\nu}z^{{h_{II}}-2}\cdots\over \mathcal{C}_{{h_{II}}} z^{h_{II}} +\mathcal{C}_{h_{II}-1} z^{h_{II}-1}+\mathcal{C}_{h_{II}-2} z^{h_{II}-2}\cdots} ~&&\text{\ \ for}~ \mathcal{D}_{II} \\
{\mathcal{N}_{h_{II'}}^{\mu\nu} z^{h_{II'}}+ \mathcal{N}_{{h_{II'}}-1}^{\mu\nu} z^{{h_{II'}}-1}+\mathcal{N}_{{h_{II'}}-2}^{\mu\nu}z^{{h_{II'}}-2}\cdots\over \mathcal{C}_{{h_{II'}}} z^{h_{II'}} +\mathcal{C}_{h_{II'}-1} z^{h_{II'}-1}+\mathcal{C}_{h_{II'}-2} z^{h_{II'}-2}\cdots} ~&&\text{\ \ for}~ \mathcal{D}_{II'} \\
{\mathcal{N}_{h_{III}}^{\mu\nu} z^{h_{III}}+ \mathcal{N}_{h_{III}-1}^{\mu\nu} z^{h_{III}-1}+\mathcal{N}_{h_{III}-2}^{\mu\nu}z^{h_{III}-2}\cdots\over \mathcal{C}_{h_{III}+1} z^{h_{III}+1} +\mathcal{C}_{h_{III}} z^{h_{III}}+\mathcal{C}_{{h_{III}}-1} z^{{h_{III}}-1}\cdots} ~&&\text{\ \ for}~ \mathcal{D}_{III}\\
{\mathcal{N}_{h_{III'}}^{\mu\nu} z^{h_{III'}}+ \mathcal{N}_{h_{III'}-1}^{\mu\nu} z^{h_{III'}-1}+\mathcal{N}_{h_{III'}-2}^{\mu\nu}z^{h_{III'}-2}\cdots\over \mathcal{C}_{h_{III'}+1} z^{h_{III'}+1} +\mathcal{C}_{h_{III'}} z^{h_{III'}}+\mathcal{C}_{{h_{III'}}-1} z^{{h_{III'}}-1}\cdots} ~&&\text{\ \ for}~ \mathcal{D}_{III'}
\eea 
where we use $\mathcal{N}_{h_I}^{\mu\nu}$ et al. to denote the highest $z$-order term of $\mathcal{N}^{\mu\nu}$ for each Feynman diagram.

Then we can write 
\bea\label{LargeExap}
\mathcal{M}_1^{\mu\nu}&=&\mathcal{\bar M}_1^{\mu\nu}\nb\\
\mathcal{M}_0^{\mu\nu}&=&\mathcal{\bar M}_0^{\mu\nu}+\sum_{\mathcal{D}_{II'}} {\mathcal{N}_{h_{II'}}^{\mu\nu}\over \mathcal{C}_{h_{II'}}}-\sum_{\mathcal{D}_I} {\mathcal{C}_{h_I-2} \mathcal{N}_{h_I}^{\mu\nu}\over \mathcal{C}_{h_I-1}^2 }\nb\\
\mathcal{M}_{-1}^{\mu\nu}&=&\mathcal{\bar M}_{-1}^{\mu\nu}-\sum_{\mathcal{D}_1}{\mathcal{C}_{h_I-2} \mathcal{N}_{h_I-1}^{\mu\nu}\over \mathcal{C}_{h_I-1}^2}+\sum_{\mathcal{D}_I} {(\mathcal{C}_{h_I-2}^2-\mathcal{C}_{h_I-1}\mathcal{C}_{h_I-3})\mathcal{N}_{h_I}^{\mu\nu}\over \mathcal{C}_{h_I-1}^3}-\sum_{\mathcal{D}_{II}}{\mathcal{C}_{h_{II}-1} \mathcal{N}_{h_{II}}^{\mu\nu}\over \mathcal{C}_{h_{II}}^2}\nb\\
&&-\sum_{\mathcal{D}_{II'}}{\mathcal{C}_{h_{II'}-1} \mathcal{N}_{h_{II'}}^{\mu\nu}\over \mathcal{C}_{h_{II'}}^2}+\sum_{\mathcal{D}_{II'}} {\mathcal{N}_{h_{II'}-1}^{\mu\nu}\over \mathcal{C}_{h_{II'}}}+\sum_{\mathcal{D}_{III'}}
{\mathcal{N}_{h_{III'}}^{\mu\nu}\over \mathcal{C}_{h_{III'}+1}}+\mathcal{M}_{-1(M)}^{\mu\nu},
\eea
with
\bea\label{barTerm}
\mathcal{\bar M}_1^{\mu\nu}&=&\sum_{\mathcal{D}_I} {\mathcal{N}_{h_I}^{\mu\nu}\over \mathcal{C}_{h_I-1}}\nb\\
\mathcal{\bar M}_0^{\mu\nu}&=&\sum_{\mathcal{D}_1} {\mathcal{N}_{h_I-1}^{\mu\nu}\over \mathcal{C}_{h_I-1}}+\sum_{\mathcal{D}_{II}} {\mathcal{N}_{h_{II}}^{\mu\nu}\over \mathcal{C}_{h_{II}}}\nb\\
\mathcal{\bar M}_{-1}^{\mu\nu}&=&\sum_{\mathcal{D}_I} {\mathcal{N}_{h_I-2}^{\mu\nu}\over \mathcal{C}_{h_I-1}}+\sum_{\mathcal{D}_{II}} {\mathcal{N}_{h_{II}-1}^{\mu\nu}\over \mathcal{C}_{h_{II}}}+\sum_{\mathcal{D}_{III}} {\mathcal{N}_{h_{III}}^{\mu\nu}\over \mathcal{C}_{h_{III}+1}}.
\eea
In \eref{LargeExap} the last term for $\mathcal{M}_{-1}^{\mu\nu}$, ie $\mathcal{M}_{-1(M)}^{\mu\nu}$, is the contribution from M terms of the three point vertexes,  which is represented by the diagrams (a) and (b) in Figure \ref{Mtermcancel}. This term will be discussed in \sref{order1overz}. In \eref{LargeExap} and \eref{barTerm}, the summations are over ordered product $OP\{\alpha_{u_N} \bigcup \alpha_{d_M}\}$ \cite{Boels}, where $\alpha_{u_N}$ is the ordered subsets of $N$ up-legs $\{u_1,u_2, \cdots, u_N\}$ and $\alpha_{d_M}$ is the ordered subsets of $M$ down-legs $\{d_1,d_2, \cdots, d_M\}$. The ordered product  is the set of all permutations which leave the order of $\alpha_{u_N}$ and $\alpha_{d_M}$ invariant. For example, we have
\be
\sum_{\mathcal{D}_I} {\mathcal{N}_{h_I}^{\mu\nu}\over \mathcal{C}_{h_I-1}}\equiv\sum_{OP\{\alpha_{u_N} \bigcup \alpha_{d_M}\}} {\mathcal{N}_{h_I}^{\mu\nu}\over \mathcal{C}_{h_I-1}}.
\ee

Using \tref{redamplitude}, we can classify the terms that contribute to $\mathcal{M}_1^{\mu\nu}$ and $\mathcal{M}_0^{\mu\nu}$ into the following groups:
\begin{enumerate}
\item $\mathcal{D}_I$ with all the reduced three point vertexes  taking their S term components.
\item $\mathcal{D}_I$ with only one of the reduced three point vertex taking its R term part.
\item $\mathcal{D}_{II}$ with all the reduced three point vertexes  taking their S term components.
\item $\mathcal{D}_{II'}$ with all the reduced three point vertexes  taking their S term components.
\end{enumerate}
For the meaning of R and S terms in the reduced three point vertexes, refer to \eref{newV3s} and \eref{ReduV}, with the external legs playing the role of Line 3 therein.

Case 1 is manifestly proportional to $g^{\mu\nu}$ and contributes to $\mathcal{N}_{h_I}^{\mu\nu}$ and $\mathcal{N}_{h_I-1}^{\mu\nu}$ in \eref{LargeExap} and \eref{barTerm}; Case 2 and Case 3 contribute to $\mathcal{N}_{h_I-1}^{\mu\nu}$ and $\mathcal{N}_{h_{II}}^{\mu\nu}$ respectively, and are manifestly antisymmetric in $\mu$ ad $\nu$; Case 4, which contributes to $\mathcal{N}_{h_{II'}}^{\mu\nu}$, is manifestly proportional to $g^{\mu\nu}$. Thus according to \eref{LargeExap}, an immediate conclusion is made that, for adjacent or non-adjacent BCFW shifts, $\mathcal{M}_1^{\mu\nu}$ is proportional to $g^{\mu\nu}$, and $\mathcal{M}_0^{\mu\nu}$  is in the form of $A g^{\mu\nu}+B^{\mu\nu}$  with $B^{\mu\nu}$ antisymmetric in $\mu$ and $\nu$. In the next section, we will see how non-adjacent shifts imply improved boundary behaviors compared with adjacent shifts.

\section{Amplitudes for Non-adjacent BCFW Shifts}\label{Sec:Non-Adj}
We first show a property which is special for non-adjacent BCFW shifts. Such property is very useful in analyzing each summation in the right hand side of \eref{barTerm}. Furthermore, it is this property that results in better boundary behaviors for amplitudes under non adjacent shifts.   



\subsection{Permutation Sums}
In this subsection, we discuss $\sum_{\mathcal{D}_I} {\mathcal{N}_{h_I}^{\mu\nu}\over \mathcal{C}_{h_I-1}}$ in detail. The conclusions also hold for other summations in \eref{barTerm}. We use $k_{l,u_i}$ to denote for $k_l+k_{u_1}+k_{u_{u_1}}+\cdots+k_{u_i}$ and $k_{d_j,u_i}$ for $k_{d_j}+k_{d_{j-1}}+\cdots +k_{d_1}+k_l+k_{u_1}+k_{u_2}+\cdots+k_{u_j}$.  As a warm-up exercise, we investigate an example with N legs above and 1 leg below the complex line, see Figure \ref{example1}.
\begin{figure}[]
\centering
\includegraphics{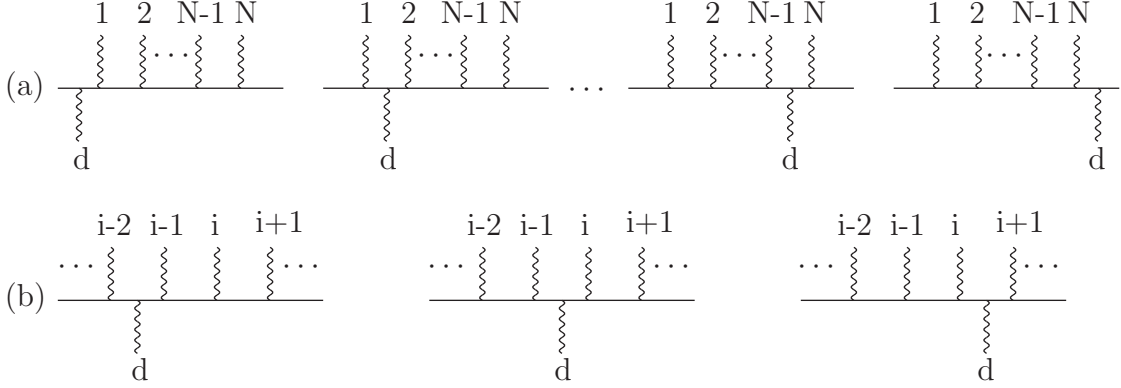}
\caption{When there are N legs above and 1 leg below the complex line, (a) shows all the diagrams with all external legs attaching the complex line and contributing to $\mathcal{O}(z^1)$. (b) contains three diagrams out of (a) for analysis.}
\label{example1}
\end{figure}
We first investigate the highest z order terms of the products of the propagators for the three diagrams as in (b) of Figure \ref{example1}. For convenience, we will omit the $-i$ factors in the propagators in the following. Since there is only one leg "d" below the complex line, this "d" can be viewed as "$d_1$". For the three diagrams of (b) in Figure \ref{example1}, $\mathcal{C}_{h_I-1} (\{u_i,d_j, u_i u_{i+1}, d_j d_{j+1}, u_i d_j\})$ are:


\bea\label{b1}
&&\frac{1}{2 z k_{l,u_1}\cdot \eta}\frac{1}{2 z k_{l,u_2}\cdot \eta}\cdots \frac{1}{2 z k_{l,u_{i-2}}\cdot \eta}\frac{1}{2 z k_{d,u_{i-2}}\cdot \eta}\frac{1}{2 z k_{d,u_{i-1}}\cdot \eta}\cdots\frac{1}{2 z k_{d,u_{N-1}}\cdot \eta}\\
=&&\frac{1}{2z k_d\cdot \eta}\frac{1}{2 z k_{l,u_1}\cdot \eta}\cdots \frac{1}{2 z k_{l,u_{i-3}}\cdot \eta}(\frac{1}{2 z k_{l,u_{i-2}}\cdot \eta}-\frac{1}{2 z k_{d,u_{i-2}}\cdot \eta})\frac{1}{2 z k_{d, u_{i-1}}\cdot \eta}\cdots\frac{1}{2 z k_{d, u_{N-1}}\cdot \eta}.\nonumber
\eea
\bea\label{b2}
&&\frac{1}{2 z k_{l,u_1}\cdot \eta}\frac{1}{2 z k_{l,u_2}\cdot \eta}\cdots \frac{1}{2 z k_{l,u_{i-1}}\cdot \eta}\frac{1}{2 z k_{d,u_{i-1}}\cdot \eta}\frac{1}{2 z k_{d,u_{i}}\cdot \eta}\cdots\frac{1}{2 z k_{d, u_{N-1}}\cdot \eta}\\
=&&\frac{1}{2z k_d\cdot \eta}\frac{1}{2 z k_{l,u_1}\cdot \eta}\cdots \frac{1}{2 z k_{l,u_{i-2}}\cdot \eta}(\frac{1}{2 z k_{l,u_{i-1}}\cdot \eta}-\frac{1}{2 z k_{d,u_{i-1}}\cdot \eta})\frac{1}{2 z k_{d,u_{i}}\cdot \eta}\cdots\frac{1}{2 z k_{d,u_{N-1}}\cdot \eta}.\nonumber
\eea
\bea\label{b3}
&&\frac{1}{2 z k_{l,u_1}\cdot \eta}\frac{1}{2 z k_{l,u_2}\cdot \eta}\cdots \frac{1}{2 z k_{l,u_i}\cdot \eta}\frac{1}{2 z k_{d, u_{i}}\cdot \eta}\frac{1}{2 z k_{d, u_{i+1}}\cdot \eta}\cdots\frac{1}{2 z k_{d, u_{N-1}}\cdot \eta}\\
=&&\frac{1}{2z k_d\cdot \eta}\frac{1}{2 z k_{l,u_1}\cdot \eta}\cdots \frac{1}{2 z k_{l,u_{i-1}}\cdot \eta}(\frac{1}{2 z k_{l,u_{i}}\cdot \eta}-\frac{1}{2 z k_{d,u_{i}}\cdot \eta})\frac{1}{2 z k_{d, u_{i+1}}\cdot \eta}\cdots\frac{1}{2 z k_{d, u_{N-1}}\cdot \eta}.\nonumber
\eea
It is observed that the first term in \eref{b2} cancels the second term in \eref{b3} and the first term in \eref{b1} cancels the second term in \eref{b2}. This manner of cancellation happens for each two successive diagrams in (a) of Figure \ref{example1}, and it is found that the sum of all diagrams in (a) of Figure \ref{example1} turns out to be 0, for $\mathcal{\bar M}_1^{\mu\nu}$ calculations. When including the numerator, ie. the product of the vertexes $\mathcal{N}_{h_I}^{\mu\nu}$, the summation  of  equations such as  \eref{b1}, \eref{b2} and \eref{b3} for all the diagrams in (a) of Figure \ref{example1} is  just
\bea\label{Rec1}
\sum_{\mathcal{D}_I}{\mathcal{N}_{h_I}^{\mu\nu}\over \mathcal{C}_{h_I-1}}&\equiv&\sum_{OP\{\alpha_{u_N} \bigcup d\}}{\mathcal{N}_{h_I}^{\mu\nu}\over \mathcal{C}_{h_I-1}}\nb\\
&=&\sum_{i=1}^N{1\over 2 z k_{l,u_1}\cdot \eta}{1\over 2 z k_{l,u_2}\cdot \eta}\cdots {1\over 2 z k_{l,u_{i-1}}\cdot \eta}{1\over 2 z k_{d}\cdot \eta}{1\over 2 z k_{d,u_{i}}\cdot \eta}\cdots{1\over 2 z k_{d,u_N}\cdot \eta}\nb\\
&\times&(\mathcal{N}^{\mu\nu}_{h_I}(\cdots d, u_i, \cdots)-\mathcal{N}^{\mu\nu}_{h_I}(\cdots u_i, d, \cdots)).
\eea

For general non-adjacent BCFW shifts with $N$ up-legs and $M$ down-legs.  We can prove that the summation in \eref{barTerm} can be recombined into the summation of terms  like \eref{Rec1}.
\begin{theorem}\label{lemResum}
\bea\label{Resum}
&&\sum_{\mathcal{D}_I}{\mathcal{N}^{\mu\nu}_{h_I}(\{u_i,d_j\})\over \mathcal{C}_{h_I-1}(\{u_i,d_j\})}\equiv \sum_{OP\{\alpha_{u_N} \bigcup \alpha_{d_M}\}}{\mathcal{N}^{\mu\nu}_{h_I}(\{u_i,d_j\})\over \mathcal{C}_{h_I-1}(\{u_i,d_j\})}\nb \\
&=&\sum_{j, i=1}^{M,N}\sum_{OP\{\alpha_{u_{i-1}}\atop \bigcup \alpha_{d_{j-1}}\}}\sum_{{OP\{\alpha_{(u_{i+1},u_M)}\atop \alpha_{(d_{j+1}, d_{M})}\} }}\nb \\
&&{1\over 2 z k_{l,u_1}\cdot \eta}{1\over 2 z k_{d_1,u_1}\cdot \eta}\cdots {1\over 2 z k_{d_{j-1},u_{i-1}}\cdot \eta}{1\over 2 z (k_{d_1}+\cdots+k_{d_M})\cdot \eta}{1\over 2 z k_{d_{j},u_i}\cdot \eta}\cdots{1\over 2 z k_{d_{M-1},u_N}\cdot \eta}\nb\\ \nb\\ 
&\times&(\mathcal{N}^{\mu\nu}_{h_I}(\cdots, d_j, u_i, \cdots)-\mathcal{N}^{\mu\nu}_{h_I}(\cdots, u_i, d_j, \cdots)).\nb\\
\eea
In the last line of \eref{Resum}, only the order of nearby up-line and down-line pair, ie. $u_i$ and $d_j$ is inter-changed.  In the original form in large z limit only one of the propagators in $\mathcal{M}^{\mu\nu}(\cdots, d_j, u_i, \cdots)$ and $\mathcal{M}^{\mu\nu}(\cdots, u_i, d_j, \cdots)$ is different which is the propagator between $u_i$  and $d_j$. In the recombined summation, this different propagator is replaced with ${1\over 2 z (k_{d_1}+\cdots+k_{d_M})\cdot \eta}$ with other propagators not changed. Similar equations hold for the other summations in \eref{barTerm}. For example, for $\sum_{\mathcal{D}_1} {\mathcal{N}_{h_I-1}^{\mu\nu}\over \mathcal{C}_{h_I-1}}$, we just replace the $\mathcal{N}^{\mu\nu}_{h_I}$ in \eref{Resum} with $\mathcal{N}^{\mu\nu}_{h_I-1}$. For $\sum_{\mathcal{D}_{II}} {\mathcal{N}_{h_{II}}^{\mu\nu}\over \mathcal{C}_{h_{II}}}$, we first replace $\mathcal{N}^{\mu\nu}_{h_I}$ in \eref{Resum} with $\mathcal{N}_{h_{II}}^{\mu\nu}$. Then say the four point vertex in $\mathcal{D}_{II}$ is $\bar V_{d_j d_j+1}$, we define $d'_i=d_i$ for $i<j$, $d'_i=d_j d_{j+1}$ for $i=j$, $d'_i=d_{i+1}$ for $i>j$, $k_{d'_i}=k_{d_i}$ for $i<j$, $k_{d'_i}=k_{d_j}+k_{d_{j+1}}$ for $i=j$, $k_{d'_i}=k_{d_{i+1}}$ for $i>j$, and replace the $\{d_i\}$ in \eref{Resum} with $\{d'_i\}$. We do not repeat for other summations in \eref{barTerm}.
\end{theorem}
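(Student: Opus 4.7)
The plan is to establish the identity by a telescoping argument built on the partial-fraction decomposition of consecutive propagators in the complex line, extending the warm-up computation of equations (b1)--(b3). The central algebraic input is
\[
\frac{1}{(A\cdot\eta)(B\cdot\eta)} = \frac{1}{(B-A)\cdot\eta}\left(\frac{1}{A\cdot\eta}-\frac{1}{B\cdot\eta}\right),
\]
applied to consecutive leading-$z$ propagators where $B-A$ is a single external momentum. Because the leading-$z$ propagator at a given position depends only on the numbers of up- and down-legs already attached, not on their internal order, this identity couples shuffles that differ by a single adjacent swap of a $u$ with a $d$.

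First I would re-derive the warm-up case in a self-contained way, making the cancellation between successive diagrams of the shuffle explicit: at every ``interior'' position the partial-fraction pieces from two neighbouring diagrams cancel, and the only surviving contribution localizes at the swap of $u_i$ with the unique down-leg $d$, multiplied by the position-independent bridge $1/(2z k_d\cdot\eta)$. This reproduces \eref{Rec1} and identifies the bridge denominator as the momentum difference $B-A=k_d$ produced by partial fractions.

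Next, for general $N$ and $M$ I would argue by induction on $M$, with the warm-up as the base case. In the inductive step I would single out the last down-leg $d_M$ and track how it slides through each shuffle via iterated partial fractions; the compensating boundary terms telescope to pin $d_M$ at a canonical position in each contribution. Applying the procedure recursively to the remaining down-legs, the accumulated $B-A$ factors combine with momentum conservation---using $k_l\cdot\eta=k_r\cdot\eta=\eta^2=0$ at leading $z$---into the single universal denominator $(k_{d_1}+\cdots+k_{d_M})\cdot\eta$ advertised in \tref{lemResum}. The position-independence of the bridge for $M>1$ is not a one-line partial-fraction fact; it is the upshot of these nested telescopings together with the conservation law.

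The main obstacle I expect is the combinatorial bookkeeping of the double shuffle sum. One must show that after all cancellations the surviving contribution factorizes cleanly into an independent ``before'' shuffle over $\alpha_{u_{i-1}}\cup\alpha_{d_{j-1}}$ and an ``after'' shuffle over $\alpha_{(u_{i+1},u_N)}\cup\alpha_{(d_{j+1},d_M)}$, separated by the swap of $(u_i,d_j)$ and the universal bridge, with each $(i,j)$ pair receiving exactly the right multiplicity. Carefully identifying which pairs of telescoping partners map to which $(i,j)$ on the right-hand side is the delicate step; once this matching is established, the analogous identities for the other summations in \eref{barTerm} follow by the substitutions indicated after the theorem, applying the same telescoping to $\mathcal{N}^{\mu\nu}_{h_I-1}$, $\mathcal{N}^{\mu\nu}_{h_{II}}$, and the four-point vertex variants.
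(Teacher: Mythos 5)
Your plan correctly isolates the algebraic seed of the warm-up example: the partial-fraction identity applied to the two leading-$z$ propagators flanking the single down-leg, whose momenta differ by exactly $k_d$, followed by a telescoping cancellation between adjacent shuffles. However, the generalization you sketch has a genuine gap, and it sits precisely at the step you defer. For $M>1$ the two propagators flanking a given $d_j$ differ by $k_{d_j}$, so iterated local partial fractions produce bridges $1/(2z\,k_{d_j}\cdot\eta)$ (or, if applied to the pair of propagators that differ between two adjacent shuffles, bridges involving $(k_{d_M}-k_{u_i})\cdot\eta$), not the single universal bridge $1/(2z\,(k_{d_1}+\cdots+k_{d_M})\cdot\eta)$ that appears in \eref{Resum}. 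Moreover, a nested telescoping over $d_M$ and then over the remaining down-legs would consume two or more ordinary propagators and introduce two or more bridge factors per term, whereas each term on the right-hand side of \eref{Resum} replaces exactly one propagator by exactly one bridge; the counting does not match, so the "nested telescopings" cannot combine into the advertised form by local algebra alone. There is a second, structural problem with inducting on $M$: the clean identity for $M-1$ down-legs is \emph{not} available as an inductive hypothesis, because the identity relies on total momentum conservation of the full leg set, $(k_{u_1}+\cdots+k_{u_N})\cdot\eta=-(k_{d_1}+\cdots+k_{d_M})\cdot\eta$, which fails for any proper subset of the legs.

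The paper's proof circumvents both issues by a different induction. It fixes a single ordering of the legs, defines $1/\bar{\mathcal{C}}_{h_I-1}$ as the total coefficient that the right-hand side of \eref{Resum} assigns to that ordering, and proves by induction on appending the rightmost leg (four cases: $u$ after $u$, $u$ after $d$, $d$ after $d$, $d$ after $u$) that $1/\bar{\mathcal{C}}_{h_I-1}$ equals $1/\mathcal{C}_{h_I-1}$ when the ordering ends in an up-leg, and equals $1/\mathcal{C}_{h_I-1}$ times the explicit correction ratio $\bigl(-\sum_m k_{u_m}\cdot\eta\bigr)/\bigl(\sum_m k_{d_m}\cdot\eta\bigr)$ when it ends in a down-leg; the swap terms of \eref{Resum} enter only in the mixed cases, and momentum conservation is invoked once, at the very end, to set the ratio to $1$. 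If you want to salvage your route, you would need to replace "assume the theorem for $M-1$ down-legs" by an analogous statement carrying an explicit, non-unit correction factor, at which point you have essentially reconstructed the paper's argument.
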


\begin{proof}
To prove this, we only need to prove  that each term of a fixed order of up and down type legs in the left hand side of \eref{Resum} is equal to the sum of terms in the right hand side with the same order in $\mathcal{N}^{\mu\nu}_{h_I}$. This can be done recursively. First we assume that, for each ordering of legs in $\mathcal{N}^{\mu\nu}_{h_I}$, the summation  of the right hand side of \eref{Resum} with $N-1$ up-lines and $M-1$ down-lines is  
\bea
{\mathcal{N}^{\mu\nu}_{h_I}(\cdots u_{N-1})\over\mathcal{\bar C}_{h_I-1}(\cdots u_{N-1})}
\eea
when the most right side leg is $u_{N-1}$, with
$$\frac{1}{\mathcal{\bar C}_{h_I-1}(\cdots u_{N-1})}={1\over \mathcal{C}_{h_I-1}(\cdots u_{N-1})}=
{1\over 2 z k_{u_1}\cdot \eta}{1\over 2 z (k_{u_1}+k_{d_1})\cdot \eta}\cdots  \cdots{1\over 2 z k_{d_{M-1} u_{N-2}}\cdot \eta}.
$$
Similarly for the case with the most right side leg being $d_{M-1}$, the summation is
\bea
{\mathcal{N}^{\mu\nu}(\cdots d_{M-1})\over \mathcal{\bar C}_{h_I-1}(\cdots d_{M-1})}
\eea
with ${1\over \mathcal{\bar C}_{h_I-1}(\cdots d_{M-1})}={1\over \mathcal{C}_{h_I-1}(\cdots d_{M-1})}{-2 z (k_{u_1}+\cdots+k_{u_{N-1}})\cdot \eta\over 2 z (k_{d_1}+\cdots+k_{d_{M-1}})\cdot \eta}$ and $${1\over\mathcal{C}_{h_I-1}(\cdots d_{M-1})}={1\over 2 z k_{u_1}\cdot \eta}{1\over 2 z (k_{u_1}+k_{d_1})\cdot \eta}\cdots  \cdots{1\over 2 z k_{d_{M-2} u_{N-1}}\cdot \eta}. $$
Then if we attach leg $u_N$ to the complex line following the sequence  $(\cdots u_{N-1})$, we can get 
\bea
{\mathcal{\bar C}_{h_I-1}(\cdots u_{N-1} u_{N})}={\mathcal{C}_{h_I-1}(\cdots u_{N-1} u_{N})}.
\eea

If we attach $u_N$ to the complex line following the sequence $(\cdots d_{M-1})$, we can obtain 
\bea
\mathcal{\bar C}_{h_I-1}(\cdots d_{M-1} u_{N})&=&\mathcal{\bar C}_{h_I-1}(\cdots d_{M-1}){1\over2 z (k_{u_{N-1}}+\cdots+k_{d_{M-1}})\cdot \eta}\nb\\ &&+\mathcal{C}_{h_I-1}(\cdots d_{M-1}){1\over 2 z (k_{d_1}+\cdots+k_{d_{M-1}})\cdot\eta}\nb\\
&=&\mathcal{C}_{h_I-1}(\cdots d_{M-1}u_N).
\eea
Here there is one additional contribution from changing the order of  $d_{M-1}$ and $u_N$ in the right hand side of \eref{Resum}.

Similarly, if we attach the leg $d_M$ to the complex line following the sequence $(\cdots d_{M-1})$, we can get 
\bea
{1\over\mathcal{\bar{C}}_{h_I-1}(\cdots d_{M-1} d_{M})}&=&{1\over\mathcal{\bar{C}}_{h_I-1}(\cdots d_{M-1})}{1\over2 z (k_{u_{N-1}}+\cdots+k_{d_{M-1}})\cdot \eta}\times {2 z (k_{d_1}+\cdots+k_{d_{M-1}})\cdot\eta\over 2 z (k_{d_1}+\cdots+k_{d_{M}})\cdot\eta}\nb\\ &=&{1\over \mathcal{C}_{h_I-1}(\cdots d_{M-1}d_{M})}{-2 z (k_{u_1}+\cdots+k_{u_{N-1}})\cdot \eta\over 2 z (k_{d_1}+\cdots+k_{d_{M}})\cdot \eta}.
\eea
And if attaching  the line $d_M$ to the complex line following the sequence $(\cdots u_{N-1})$,  we can get 
\bea
{1\over \mathcal{\bar{C}}_{h_I-1}(\cdots  u_{N-1} d_{M})}&=&{1\over \mathcal{\bar C}_{h_I-1}(\cdots u_{N-1})}{2 z (k_{d_1}+\cdots+k_{d_{M-1}})\cdot\eta\over 2 z (k_{d_1}+\cdots+k_{d_{M}})\cdot\eta}{1\over2 z (k_{u_{N-1}}+\cdots+k_{d_{M-1}})\cdot \eta}\nb\\ &&+{1\over \mathcal{C}_{h_I-1}(\cdots u_{N-1})}{-1\over 2 z (k_{d_1}+\cdots+k_{d_{M}})\cdot\eta}\nb\\
&=&{1\over \mathcal{\bar C}_{h_I-1}(\cdots u_{N-1}d_M)}{-2 z (k_{u_1}+\cdots+k_{u_{N-1}})\cdot \eta\over 2 z (k_{d_1}+\cdots+k_{d_{M}})\cdot \eta}.
\eea

Thus for $N$ up legs and $M$ down legs, we get:
\bea
\mathcal{\bar C}_{h_I-1}(\cdots u_{N})&=&\mathcal{C}_{h_I-1}(\cdots u_{N}) \nb\\
{1\over  \mathcal{\bar{C}}_{h_I-1}(\cdots  d_{M})}&=&{1\over \mathcal{C}_{h_I-1}(\cdots d_M)}{-2 z (k_{u_1}+\cdots+k_{u_{N}})\cdot \eta\over 2 z (k_{d_1}+\cdots+k_{d_{M}})\cdot \eta}.
\eea
With momenta conservation and the shift condition \eref{conditionEta} it is easy to see 
\bea
{-2 z (k_{u_1}+\cdots+k_{u_{N}})\cdot \eta\over 2 z (k_{d_1}+\cdots+k_{d_{M}})\cdot \eta}=1.
\eea
By induction, the equation \eref{Resum}, ie. \tref{lemResum}, has been proved. \endofproof
\end{proof}
\begin{corollary}\label{corResum}
When the $\mathcal{N}^{\mu\nu}_{h_I}(\{u_i,d_j\})$ are independent of the relative orders of the external legs, we have 
\be
\sum_{\mathcal{D}_I}{1\over \mathcal{C}_{h_I-1}}=0.
\ee
Such equations hold also for the other cases in \eref{barTerm}. For example, $\sum_{\mathcal{D}_{II}}{1\over \mathcal{C}_{h_{II}}}=0$ and $\sum_{\mathcal{D}_{III}}{1\over \mathcal{C}_{h_{III}+1}}=0$.
\end{corollary}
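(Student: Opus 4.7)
The plan is to derive the corollary as an immediate consequence of Theorem \ref{lemResum}. The key observation is that the right-hand side of \eqref{Resum} is built entirely from antisymmetrized differences of the form $\mathcal{N}^{\mu\nu}_{h_I}(\cdots,d_j,u_i,\cdots)-\mathcal{N}^{\mu\nu}_{h_I}(\cdots,u_i,d_j,\cdots)$, one for each admissible pair $(i,j)$ in the ordered product of sub-permutations. Under the hypothesis that $\mathcal{N}^{\mu\nu}_{h_I}$ does not depend on the relative order of the external legs, each such difference vanishes term by term, and hence the entire right-hand side of \eqref{Resum} collapses to zero.

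On the left-hand side, the same hypothesis allows the common factor $\mathcal{N}^{\mu\nu}_{h_I}$ to be pulled out of the sum over diagrams $\mathcal{D}_I$, leaving $\mathcal{N}^{\mu\nu}_{h_I}\sum_{\mathcal{D}_I}1/\mathcal{C}_{h_I-1}$. Equivalently, one may rerun the proof of Theorem \ref{lemResum} with the constant numerator $\mathcal{N}^{\mu\nu}_{h_I}\equiv 1$; the inductive attachment argument that builds $\mathcal{\bar C}_{h_I-1}$ from $\mathcal{C}_{h_I-1}$ is purely a statement about propagator denominators and momentum conservation, so it goes through untouched. Either way, equating the two sides of \eqref{Resum} under the order-independence assumption yields $\sum_{\mathcal{D}_I}1/\mathcal{C}_{h_I-1}=0$, which is the first claim.

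For the parallel statements, I would invoke the ``similar equations'' mentioned just after \eqref{Resum}. For $\mathcal{D}_{II}$ one uses the version of the theorem with the merged leg $d'_j=d_j d_{j+1}$ (or the corresponding merged up-leg), replacing $\mathcal{N}^{\mu\nu}_{h_I}$ by $\mathcal{N}^{\mu\nu}_{h_{II}}$ and $\mathcal{C}_{h_I-1}$ by $\mathcal{C}_{h_{II}}$. When $\mathcal{N}^{\mu\nu}_{h_{II}}$ is independent of the remaining orderings, the antisymmetrized differences again vanish pairwise, forcing $\sum_{\mathcal{D}_{II}}1/\mathcal{C}_{h_{II}}=0$. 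The case $\mathcal{D}_{III}$ is treated identically, merging two legs on each side if necessary, to give $\sum_{\mathcal{D}_{III}}1/\mathcal{C}_{h_{III}+1}=0$.

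There is no real obstacle here beyond bookkeeping; the only point deserving a little care is making sure that, in each of the $\mathcal{D}_{II}$ and $\mathcal{D}_{III}$ variants, the relabeling of merged legs preserves the hypothesis that the numerator is order-independent and that momentum conservation plus \eqref{conditionEta} still produces the crucial ratio $-2z(k_{u_1}+\cdots+k_{u_N})\cdot\eta / 2z(k_{d_1}+\cdots+k_{d_M})\cdot\eta=1$ used at the end of the proof of Theorem \ref{lemResum}. Since the merged-leg momentum is just the sum of the original momenta on the same side of the complex line, both conditions are inherited automatically, and the corollary follows.
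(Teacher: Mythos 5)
Your proposal is correct and is essentially the argument the paper intends: the corollary is an immediate consequence of Theorem \ref{lemResum}, since with an order-independent (e.g.\ constant) numerator every difference $\mathcal{N}^{\mu\nu}_{h_I}(\cdots,d_j,u_i,\cdots)-\mathcal{N}^{\mu\nu}_{h_I}(\cdots,u_i,d_j,\cdots)$ on the right-hand side of \eref{Resum} vanishes, forcing the left-hand side $\sum_{\mathcal{D}_I}1/\mathcal{C}_{h_I-1}$ to be zero, and the $\mathcal{D}_{II}$, $\mathcal{D}_{III}$ cases follow from the merged-leg variants of the theorem. Your remark that one should run the theorem with $\mathcal{N}\equiv 1$ rather than divide by a possibly vanishing common numerator is a sensible precaution that the paper leaves implicit.
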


\subsection{Amplitudes in the Large $z$ Limit under Non-adjacent BCFW Shifts}
\subsubsection{$\mathcal{O}(z^1)$ Behavior of the Amplitudes}

To obtain the $\mathcal{O}(z^1)$ behavior of the amplitude $\mathcal{M}^{\mu\nu}$, we only need the case 1 in \sref{theorem1Application}, that is $\mathcal{D}_I$ with all the reduced three point vertexes taking their S term components. Furthermore we only need to keep the terms with highest order of z in all the vertexes and propagators, ie. $\mathcal{N}_{h_I}^{\mu\nu}$ and $\mathcal{C}^{h_I-1}$. The $z$ order of some S term $S_{u/d}$ does not depend on its position on the the complex line. As a result, $\mathcal{N}_{h_I}^{\mu\nu}\propto g^{\mu\nu}$ are the same for all diagrams of type $\mathcal{D}_I$ and we obtain:
\beq
\mathcal{M}_1^{\mu\nu}=\sum_{\mathcal{D}_I}\frac{\mathcal{N}_{h_I}^{\mu\nu}}{\mathcal{C}^{h_I-1}}=\mathcal{N}_{h_I}^{\mu\nu}\sum_{\mathcal{D}_I}\frac{1}{\mathcal{C}^{h_I-1}}=0.
\label{orderz}
\eeq
The second equation is from \cref{corResum}. The external lines can be either off-shell or on-shell. In conclusion, $\mathcal{O}(z^1)$ of $\mathcal{M}^{\mu\nu}$ for non-adjacent shifts vanish.

\subsubsection{$\mathcal{O}(z^0)$ Behavior of the Amplitudes}
In this subsection, we are going to show that: for non-adjacent shifts, 
\be\label{Mz0}
\mathcal{M}_0^{\mu\nu}\propto g^{\mu\nu}.
\ee

Using \eref{LargeExap} and \eref{barTerm}, we can classify the terms that contribute to $\mathcal{M}_0^{\mu\nu}$ into the following groups:
\begin{itemize}
\item $\sum_{\mathcal{D}_I} {c_{h_I-2} \mathcal{N}_{h_I}^{\mu\nu}\over c_{h_I-1}^2 }\propto g^{\mu\nu}$, since $\mathcal{N}_{h_I}^{\mu\nu}$ is proportional to $g^{\mu\nu}$ in diagrams $\mathcal{D}_I$. 
\item $\sum_{\mathcal{D}_{II'}} {\mathcal{N}_{h_{II'}}^{\mu\nu}\over c_{h_{II'}}}\propto g^{\mu\nu}$.  In $\mathcal{D}_{II'}$,  there is one reduced four point vertex ${\bar V}_{u_i d_j}$ in the complex line. And all the others are reduced three point vertexes with only their $S$ term components. According to the forms of ${\bar V}_{u_i d_j}$ and $S$ term, it is easy to see $\mathcal{N}_{h_{II'}}^{\mu\nu}\propto g^{\mu\nu}.$
\item $\sum_{\mathcal{D}_{II}} {(\mathcal{N}_{II})_{h_{II}}^{\mu\nu}\over (c_{II})_{h_{II}}}=0$, using \cref{corResum}, essentially the same as in \eref{orderz}.
\item $\sum_{\mathcal{D}_I} {\mathcal{N}_{h_I-1}^{\mu\nu}\over c_{h_I-1}}\propto g^{\mu\nu}.$ $\mathcal{D}_I$ are the diagrams comprised all of reduced three point vertexes. There are two contributions to this summation. One contribution is when only one of the reduced three point vertexes takes its R term part and other vertexes take their S components. Without loss of generality, we assume the vertex with the leg $u_i$ takes its R part. All these diagrams have the same $\mathcal{N}_{h_I-1}$. According to \cref{corResum}, the sum of all these diagrams contribute 0 to $\sum_{\mathcal{D}_I} {\mathcal{N}_{h_I-1}^{\mu\nu}\over c_{h_I-1}}$. The other contribution is when all the reduced three point vertexes take their $S$ term components. This contribution is obviously proportional to $g^{\mu\nu}$.
\end{itemize}
Thus we have proven that for non-adjacent shifts, $\mathcal{M}_0^{\mu\nu}$ is proportional to $g^{\mu\nu}$.

\subsubsection{$\mathcal{O}(z^{-1})$ Behavior of the Amplitudes}\label{order1overz}
The previous two sub sections do not depend on whether the external legs are on-shell or off-shell. In this sub section, we discuss $\mathcal{M}_{-1}^{\mu\nu}$ in the two cases when the external lines are all on-shell and when some of them are off-shell.

When all external lines are on shell, the "generalized Ward identities" in \lref{GWI1} and \lref{GWI2} become the real Ward identities where the expressions are exactly zero. Thus the last term for $\mathcal{M}_{-1}^{\mu\nu}$ in \eref{LargeExap}, ie. $\mathcal{M}_{-1(M)}^{\mu\nu}$, is 0. By the similar arguments as in the last sub section,  it is easy to see that each other term except $\mathcal{\bar M}_{-1}^{\mu\nu}$ in the third equation of \eref{LargeExap} is in the form of $A g^{\mu\nu}+B^{\mu\nu}$ with $B^{\mu\nu}$ antisymmetric in $\mu$ and $\nu$. We are going to concentrate on terms that contribute to $\mathcal{\bar M}_{-1}^{\mu\nu}$ in \eref{barTerm}:
\begin{itemize}
\item $\sum_{\mathcal{D}_I} {\mathcal{N}_{h_I-2}^{\mu\nu}\over c_{h_I-1}}\propto A g^{\mu\nu}+B^{\mu\nu}$. In $\mathcal{D}_I$, all the vertexes in the complex line are the three point vertexes $\bar V_{u/d}$. We can classify them into the following groups:  \\ \textbf{\textcircled{\bf{a}}} When $\bar V_{u/d}$ all take their $S$-term components or only one of them takes its $R$ term part, such contributions are obviously of form $A g^{\mu\nu}+B^{\mu\nu}$. \\ \textbf{\textcircled{\bf{b}}} When  the two vertexes with R parts are all above (or below) the complex line, for example $R_{u_i}$ and $R_{u_j}$, and others taking S terms, $\mathcal{N}_{h_I-2}^{\mu\nu}$ are the same for all these diagrams. Thus, same to \eref{orderz}, using \cref{corResum}, these terms contribute 0 to $\mathcal{\bar M}_{-1}^{\mu\nu}$. \\ \textbf{\textcircled{\bf{c}}} When the two vertexes with R parts are $R_{u_i}$ and $R_{d_j}$, with indices $\mu_{u_i}$ and $\mu_{d_j}$, other vertexes are all taking S components. Furthermore since each R term decreases order of z by 1 compared to S term, to contribute to the next to next order of the product of the vertexes ie. $\mathcal{N}_{h_I-2}$, each S term of other vertexes contributes the same to $\mathcal{N}_{h_I-2}$ regardless of its position on the complex line. $R_{u_i}$ and $R_{d_j}$ are also independent of their positions on the complex line. Thus as for the calculation of $\mathcal{N}_{h_I-2}$, we can regard $S_{u_i'}$ and $S_{d_j'}$ as commuting, $S_{u_i'}$ and $R_{d_j}$ commuting, and $R_{u_i}$ and $S_{d_j'}$ commuting. Applying \tref{lemResum}, we can see that the only non-vanishing terms are from:
\begin{small}
\begin{equation*}
\mathcal{N}_{h_I-2}^{\mu\nu}(\cdots, d_j, u_i, \cdots)-\mathcal{N}_{h_I-2}^{\mu\nu}(\cdots, u_i, d_j, \cdots)\propto (R_{d_j})^\mu{}_{\rho}{}^{\mu_{d_j}}(R_{u_i})^{\nu\rho}{}^{\mu_{u_i}}-(R_{u_i})^{\rho\mu}{}^{\mu_{u_i}}(R_{d_j})_\rho{}^\nu{}^{\mu_{d_j}},
\end{equation*}
\end{small}

which is antisymmetric in $\mu$ and $\nu$, invoking that $R$ term is antisymmetric in its first two indices, referring to \eref{newV3s}.


\item $\sum_{\mathcal{D}_{II}} {\mathcal{N}_{h_{II}-1}^{\mu\nu}\over c_{h_{II}}}\propto A g^{\mu\nu}+B^{\mu\nu}$. In $\sum_{\mathcal{D}_{II}}$, the diagrams are comprised of one reduced four point vertex, which is not ${\bar V}_{u_i d_j}$, and the rest vertexes are reduced three point vertexes, one of which takes its R term part. In the definition of the reduced vertexes \eref{ReduV}, we call the last term of $\bar{V}_{u_i u_j}$ or $\bar{V}_{d_i d_j}$ as symmetric term and the first two terms as antisymmetric term. The discussion is parallel to the case above: \\ \textbf{\textcircled{\bf{a}}}   Only one or none of the reduced four point vertex and the R term takes its anti-symmetric part. The contribution is of form $A g^{\mu\nu}+B^{\mu\nu}$. \\ \textbf{\textcircled{\bf{b}}} The vertex with R term and the four point vertex are both above (or below) the complex line. It contributes 0 to $\sum_{\mathcal{D}_{II}} {\mathcal{N}_{h_{II}-1}^{\mu\nu}\over c_{h_{II}}}$.  \\ \textbf{\textcircled{\bf{c}}} The vertex with R term and the four point vertex are on the opposite sides of the complex line. The contribution is antisymmetric in $\mu$ and $\nu$.

\item $\sum_{\mathcal{D}_{III}} {\mathcal{N}_{h_{III}}^{\mu\nu}\over c_{h_{III}+1}}\propto A g^{\mu\nu}+B^{\mu\nu}. $ In $\mathcal{D}_{III}$, the diagrams are comprised of two reduced four point vertexes, neither of which is ${\bar V}_{u_i d_j}$, and the other reduced three point vertexes all take their S term parts. The discussion is again parallel to the cases above: \\ \textbf{\textcircled{\bf{a}}}   Only one or none of the reduced four point vertexes takes its anti-symmetric part. The contribution is of form $A g^{\mu\nu}+B^{\mu\nu}$. \\ \textbf{\textcircled{\bf{b}}} The two reduced four point vertexes both take their anti-symmetric parts and are both above (or below) the complex line. It contributes 0 to $\sum_{\mathcal{D}_{III}} {\mathcal{N}_{h_{III}}^{\mu\nu}\over c_{h_{III}+1}}$. \\ \textbf{\textcircled{\bf{c}}} The two reduced four point vertexes take their antisymmetric parts and are on the opposite sides of the complex line. The contribution is antisymmetric in $\mu$ and $\nu$.

\end{itemize}

Above all, when all the external legs are on shell, for non-adjacent shifts, $\mathcal{O}(z^{-1})$ of $\mathcal{M}^{\mu\nu}$, ie. $\mathcal{M}_{-1}^{\mu\nu}$, is in form of a metric term plus a term antisymmetric in $\mu$ and $\nu$.

Now we discuss the case when some external lines are off-shell. The additional contribution is from the last term $\mathcal{M}_{-1(M)}^{\mu\nu}$ in \eref{LargeExap}, which is from the diagrams (a) and (b) of Figure \ref{Mtermcancel}. We analyze how the diagrams contribute to $\mathcal{M}_{-1}^{\mu\nu}$. Take the diagram (a) for example, with the last $M^R$ factor to be $M_{d_i}^R$ (same analysis for $M_{u_{i}}^R$). Assume the next vertex is $V_{u_{j}}$ (same analysis for $V_{d_{j}}$). Then $M_{d_i}^R\  V_{u_{j}}$ can be decomposed according to \eref{kdotV} and Figure \ref{vertexnotation}, see Figure \ref{MRdotV}.

\begin{figure}[htb]
\centering
\includegraphics{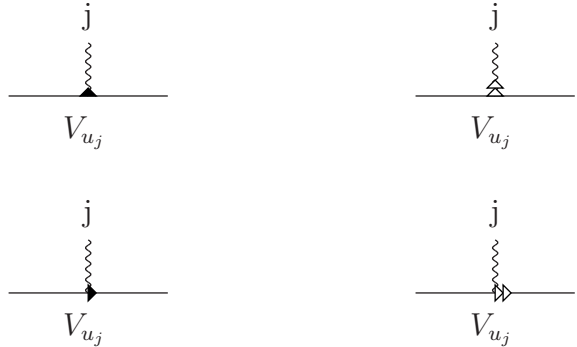}
\caption{Decomposition of $M_{d_i}^R\  V_{u_j}$ in the notations of Figure \ref{vertexnotation}. The horizontal line is the complex line, and photon line represents external leg.}
\label{MRdotV}
\end{figure}

Among the four terms in Figure \ref{MRdotV}, the first line two terms combined is in the form
\beq
k_j^2 g^{\mu_j\delta}-k_j^{\mu_j} k_j^\delta,
\label{kkterm}
\eeq
where $\delta$ is some index we do not care here. The first term in the second line of Figure \ref{MRdotV} need not be considered since they will cancel in group in the manner of Figure \ref{treecancel2}. In this cancellation, diagrams with some vertexes outside the complex line is involved, but it does not affect the property of our conclusion, once we apply less point results to these diagrams. The second term in the second line of Figure \ref{MRdotV} acts on the next vertex on the complex line, and can be analyzed in the same steps as in this paragraph. Only when the vertex being acted on is the last vertex on the complex line, the second line two terms of Figure \ref{MRdotV} should be retained, which sum up to equal $k_r^2 g^{\nu\delta}-k_r^{\nu} k_r^\delta$, also in the form of \eref{kkterm}. (b) of Figure \ref{Mtermcancel} is similarly analyzed, and results in terms in the form of \eref{kkterm}. \eref{kkterm} is 0 when $k_j$ is on shell and only receives contributions from off shell external legs. Thus we can make the conclusion that the additional contribution to $\mathcal{M}_{-1}^{\mu\nu}$ from off shell external legs is:
\beq
\sum_{\mbox{\tiny off shell\ }j} (k_j^2 g^{\mu_j\delta}-k_j^{\mu_j} k_j^\delta)\cdots,
\label{kkform}
\eeq
where the sum is over each off shell external leg.

Direct calculation shows that \eref{kkform} is antisymmetric in $\mu$ and $\nu$ when there is only 1 leg above and 1 leg below the complex line, and not antisymmetric for 5 point amplitudes, unlike to be antisymmetric for more point amplitudes.

In conclusion, for non adjacent BCFW shifts of on shell tree amplitudes, $\mathcal{O}(z^{-1})$ of $\mathcal{M}^{\mu\nu}$ is in form of a metric term plus a term antisymmetric in $\mu$ and $\nu$; for amplitudes with off shell legs, $\mathcal{O}(z^{-1})$ has additional contributions from the off shell legs in the form of \eref{kkform}, which manifestly vanishes when the legs become on shell. We guess that for on shell loop level amplitudes, terms in \eref{kkform} may cancel the contribution from ghost loops, which deserves further investigation.

\section{Conclusion}

In this article, we have carefully analyzed the boundary behaviors of pure Yang-Mills amplitudes under adjacent and non adjacent BCFW shifts in Feynman gauge. We introduced reduced vertexes for Yang-Mills fields, proved that these reduced vertexes are equivalent to the original vertexes, as for the study of boundary behaviors, which greatly simplifies our analysis of boundary behaviors. Boundary behaviors for adjacent shifts are readily obtained using reduced vertexes. Then we find that the boundary behaviors for non-adjacent shifts are much better than those of adjacent shifts. Comparing to adjacent shifts, non adjacent shifts allow us to permute the external legs while retaining color ordering. We proved a theorem about permutation sum, which plays key roles in our analysis of non-adjacent boundary behaviors besides the use of reduced vertexes, and the theorem is the essential reason for the improvement of boundary behaviors for non adjacent shifts compared to adjacent shifts. The conclusions are, $\mathcal{O}(z^{1})$ of $\mathcal{M}^{\mu\nu}$ is proportional to metric $g^{\mu\nu}$ for adjacent shifts, and vanishes for non adjacent shifts; $\mathcal{O}(z^{0})$ of $\mathcal{M}^{\mu\nu}$ is metric term plus antisymmetric term for adjacent shifts, and is proportional to $g^{\mu\nu}$ for non adjacent shifts. Based on the boundary behaviors, we find that it is possible to generalize BCFW recursion relation to calculate general tree level off shell amplitudes, with the aid of our previous papers \cite{Chen1,Chen2,Chen3}. The procedure is described in the second section, before we discuss boundary behaviors.

We proved that boundary behaviors at $\mathcal{O}(z^1)$ and $\mathcal{O}(z^0)$ do not depend on whether the external legs are on shell or not. We also analyzed the $\mathcal{O}(z^{-1})$ behavior for non adjacent shifts. When all the external legs are on shell, $\mathcal{O}(z^{-1})$ of $\mathcal{M}^{\mu\nu}$ is metric term plus antisymmetric term. When some external legs are off shell, we also give the general form of the contribution to $\mathcal{O}(z^{-1})$ from each off shell leg, which manifestly vanishes when the leg becomes on shell. For on shell loop level amplitudes, the loop lines can be dealt with as off shell legs here and has the contribution to $\mathcal{O}(z^{-1})$ in the form we have obtained, which seems very likely to cancel the ghost loop contributions, resulting in some good $\mathcal{O}(z^{-1})$ behaviors for loop level non adjacently shifted on shell amplitudes. This deserves our further investigation.

Our conclusions on boundary behaviors in Feynman gauge are consistent with those in AHK gauge in \cite{Boels,Nima1}. Our work has two major advantages. First, the necessary conditions are given explicitly in our discussion on the boundary behaviors. According to this,  we can present a procedure to calculate general tree level off shell amplitudes using BCFW technique and the technique in \cite{Chen3}. And the second is related to our permutation sum theorem, ie. \tref{lemResum}. This theorem tells us why the amplitudes with non-adjacent BCFW shifts have improved boundary behaviors. Actually, in \cite{Boels} there are several important assumptions about the relationship between the improved boundary behaviors and the general permutation sums. Hopefully, some generalization of our theorem here will be helpful for the proof of these assumptions. This will be left for further work.


{\textbf{Acknowledgement} We thank Yijian Du for helpful discussions. This work is funded by the Priority Academic Program Development of Jiangsu Higher Education Institutions (PAPD), NSFC grant No.~10775067, Research Links Programme of Swedish Research Council under contract No.~348-2008-6049, the Chinese Central Government's 985 Project grants for Nanjing University, the China Science Postdoc grant no. 020400383. the postdoc grants of Nanjing University 0201003020}

%

\end{document}